\DeclareMathAlphabet{\mathpzc}{OT1}{pzc}{m}{it}
\newcommand{\subfiguretitle}[1]{{\scriptsize{#1}} \\}
\newcommand{\R}{\mathbb{R}}                                      
\newcommand{\innerprod}[2]{\left\langle #1,\, #2 \right\rangle}  
\newcommand{\ts}{\hspace*{0.1em}}                                
\newcommand{\mc}[2][]{\mathpzc{#2}{\smash[t]{\mathstrut}}_{#1}}  
\newcommand{\relmiddle}[1]{\mathrel{}\middle#1\mathrel{}}        
\providecommand{\abs}[1]{\left\lvert #1 \right\rvert}            
\providecommand{\norm}[1]{\left\lVert #1 \right\rVert}           
\newcommand\xqed[1]{\leavevmode\unskip\penalty9999 \hbox{}\nobreak\hfill \quad\hbox{#1}}
\newcommand{\exampleSymbol}{\xqed{$\triangle$}}
\DeclareMathOperator{\diag}{diag}
\DeclareMathOperator{\tr}{tr}
\DeclareMathOperator{\mspan}{span}
\DeclareMathOperator{\rank}{rank}
\DeclareMathOperator{\Iso}{Iso}
\DeclareMathOperator{\Aut}{Aut}
\let\vec\relax
\DeclareMathOperator{\vec}{vec}
\newtheorem{theorem}{Theorem}[section]
\newtheorem{corollary}[theorem]{Corollary}
\newtheorem{lemma}[theorem]{Lemma}
\newtheorem{definition}[theorem]{Definition}
\theoremstyle{definition}
\newtheorem{example}[theorem]{Example}
\newtheorem{remark}[theorem]{Remark}
\renewcommand*\env@matrix[1][*\c@MaxMatrixCols c]{%
  \hskip -\arraycolsep
  \let\@ifnextchar\new@ifnextchar
  \array{#1}}
\begin{document}

\title{Continuous optimization methods for the \\ graph isomorphism problem}
\author[1]{Stefan Klus}
\author[2]{Patrick Gel\ss}
\affil[1]{School of Mathematical \& Computer Sciences, Heriot--Watt University, Edinburgh, UK}
\affil[2]{AI in Society, Science, and Technology, Zuse Institute Berlin, Berlin, Germany}

\date{}

\maketitle

\begin{abstract}
The graph isomorphism problem looks deceptively simple, but although polynomial-time algorithms exist for certain types of graphs such as planar graphs and graphs with bounded degree or eigenvalue multiplicity, its complexity class is still unknown. Information about potential isomorphisms between two graphs is contained in the eigenvalues and eigenvectors of their adjacency matrices. However, symmetries of graphs often lead to repeated eigenvalues so that associated eigenvectors are determined only up to basis rotations, which complicates graph isomorphism testing. We consider orthogonal and doubly stochastic relaxations of the graph isomorphism problem, analyze the geometric properties of the resulting solution spaces, and show that their complexity increases significantly if repeated eigenvalues exist. By restricting the search space to suitable subspaces, we derive an efficient Frank--Wolfe based continuous optimization approach for detecting isomorphisms. We illustrate the efficacy of the algorithm with the aid of various highly symmetric graphs.
\end{abstract}

\section{Introduction}

Graphs and networks play a central role in many applications ranging from computer science, engineering, chemistry, and biology to the social sciences. Determining whether or not two graphs are isomorphic, i.e., have exactly the same structure, is called the \emph{graph isomorphism problem} \cite{RC77, ZKT85, Babai19}. It has been shown that the problem can be solved in polynomial time if the graphs are planar \cite{HT72} or have bounded degree~\cite{Luk82} or eigenvalue multiplicity~\cite{BGM82}. Furthermore, Babai recently proved that the graph isomorphism problem is solvable in quasi-polynomial time \cite{Babai16} and summarizes its current status in \cite{Babai19} as ``not expected to be NP-complete, yet not known to be solvable in polynomial time.'' A historical perspective and a concise overview of different techniques for solving the graph isomorphism problem can be found in \cite{GS20}.

It is well known that spectral properties of graphs contain crucial information about potential isomorphisms. If the graphs have distinct eigenvalues, then checking whether a permutation exists that transforms the eigenvectors of one graph into the eigenvectors of the other graph is already sufficient. Repeated eigenvalues, however, complicate graph isomorphism testing since the eigenvectors are then only determined up to basis rotations and it is not possible anymore to directly compare their entries. The graph isomorphism problem can be regarded as a combinatorial optimization problem, which requires minimizing a cost function over the set of all permutation matrices. Different relaxations of the graph isomorphism problem have been proposed in the past, where the set of permutation matrices is either replaced by the set of orthogonal \cite{Umeyama88, ZP08, KS18} or doubly stochastic \cite{ZBV09, ABK15, FS15, Dym18} matrices. The applicability of such relaxations, however, is not well understood \cite{ABK15}. It has been shown in~\cite{ABK15} that the convex relaxation is guaranteed to find the exact isomorphism, provided that the graphs are asymmetric and friendly. This has been generalized to asymmetric graphs with unfriendly eigenvectors that have certain sparsity patterns in~\cite{FS15}. The convex relaxation is called \emph{convex exact} if it does not admit any additional solutions. For which types of symmetries the relaxation is (and in general is not) exact was analyzed in~\cite{Dym18}. We extend these results and consider in particular potentially highly symmetric graphs. We first show how repeated eigenvalues increase the set of feasible solutions of the orthogonal and convex relaxations of the graph isomorphism problem, analyze the geometric properties of these spaces, and then propose an efficient algorithm for detecting graph isomorphisms using a concave reformulation of the relaxed optimization problem that penalizes non-binary matrices. Numerical results for a set of guiding examples and benchmark problems illustrate the efficacy of the proposed algorithm. A method that is similar in spirit is described in \cite{ZBV09}, where a convex--concave relaxation along with a path following approach is used to solve graph matching and quadratic assignment problems. It was recently shown in \cite{Ling24} that if the signal-to-noise ratio is sufficiently large, then the semidefinite relaxations of these problems are exact and recover the true solutions. We focus on the graph isomorphism problem, which allows us to exploit additional information encoded in the null spaces of associated matrices, and construct a direct approach to solve the resulting relaxed optimization problems.

\section{The graph isomorphism problem}

Let $ A $ and $ B $ be the adjacency matrices of two weighted undirected graphs $ \mc[A]{G} = (\mc[A]{V}, \mc[A]{E}) $ and $ \mc[B]{G} = (\mc[B]{V}, \mc[B]{E}) $, respectively, where $ \mc[A]{V} $ and $ \mc[B]{V} $ are the sets of vertices and $ \mc[A]{E} $ and $ \mc[B]{E} $ the sets of edges. We assume that $ \mc[A]{V} = \{ \mc[1]{v}, \dots, \mc[n]{v} \} = \mc[B]{V} $, i.e., the graphs share the same vertices. Our goal is to check if the two graphs are isomorphic.

\begin{definition}[Isomorphic graphs]
Let $ \mathcal{S}(n) $ denote the symmetric group of degree $ n $ and $ \mathcal{P}(n) $ the set of all $ n \times n $ permutation matrices. The graphs $ \mc[A]{G} $ and $ \mc[B]{G} $ are called \emph{isomorphic} if one of the two equivalent conditions is satisfied:
\begin{enumerate}[leftmargin=3.5ex, itemsep=0ex, topsep=0.5ex, label=(\roman*)]
\item There exists a permutation $ \pi \in \mathcal{S}(n) $ such that
\begin{equation*}
    \big(\mc[i]{v}, \mc[j]{v}\big) \in \mc[A]{E} \iff \big(\mc[\pi(i)]{v}, \mc[\pi(j)]{v}\big) \in \mc[B]{E},
\end{equation*}
which preserves the edge weights.
\item There exists a permutation matrix $ P \in \mathcal{P}(n) $ such that
\begin{equation*}
    A = P^\top B \ts P
\end{equation*}
or, equivalently, $ P \ts A = B \ts P $.
\end{enumerate}
\end{definition}

Given the permutation $ \pi $ we can construct the permutation matrix $ P $ and vice versa using
\begin{equation*}
    p_{ij} =
    \begin{cases}
        1, & \text{if } \pi(j) = i, \\
        0, & \text{otherwise}.
    \end{cases}
\end{equation*}
If $ \mc[A]{G} $ and $ \mc[B]{G} $ are isomorphic, we write $ \mc[A]{G} \cong \mc[B]{G} $ and define
\begin{equation*}
    \Iso(\mc[A]{G}, \mc[B]{G}) = \left\{ P \in \mathcal{P}(n) \relmiddle{|} P \ts A = B \ts P \right\}
\end{equation*}
to be the set of all isomorphisms. An isomorphism from a graph $ \mc[A]{G} $ to itself is called an \emph{automorphism} and the set of automorphisms forms a group under composition (or matrix multiplication), which we will denote by $ \Aut(\mc[A]{G}) $. Graphs with nontrivial automorphism groups are called \emph{symmetric}. Examples of highly symmetric graphs are shown in Figure~\ref{fig:Petersen graph}, see also Example~\ref{ex:Petersen graph}.

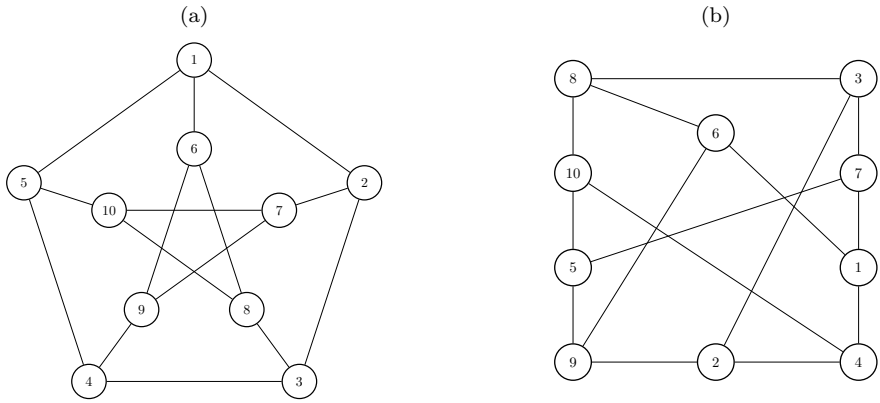
\begin{figure}
    \centering
    \begin{minipage}[t]{0.3\linewidth}
        \centering
        \subfiguretitle{(a)}
        \vspace*{1ex}
        \resizebox{1\textwidth}{!}{%
        \begin{tikzpicture}[
                >= stealth, 
                semithick 
            ]
            \tikzstyle{every state}=[
                draw=black,
                thick,
                fill=white,
                inner sep=0pt,
                text width=6mm,
                align=center,
                scale=0.8
            ]

            \foreach \i in {1,...,5}
                \node[state] at ({4*sin(360*\i/5-72)}, {4*cos(360*\i/5-72)}) (\i) {\i};
            \foreach \i in {6,...,10}
                \node[state] at ({2*sin(360*\i/5-72)}, {2*cos(360*\i/5-72)}) (\i) {\i};
            \path[-] (1) edge (2)
                     (2) edge (3)
                     (3) edge (4)
                     (4) edge (5)
                     (5) edge (1)
                     (1) edge (6)
                     (2) edge (7)
                     (3) edge (8)
                     (4) edge (9)
                     (5) edge (10)
                     (6) edge (8)
                     (6) edge (9)
                     (7) edge (9)
                     (7) edge (10)
                     (8) edge (10);
        \end{tikzpicture}}
    \end{minipage}
    \hspace{10ex}
    \begin{minipage}[t]{0.3\linewidth}
        \centering
        \subfiguretitle{(b)}
        \vspace*{2.4ex}
        \resizebox{0.86\textwidth}{!}{%
        \begin{tikzpicture}[
                >= stealth, 
                semithick 
            ]
            \tikzstyle{every state}=[
                draw=black,
                thick,
                fill=white,
                inner sep=0pt,
                text width=6mm,
                align=center,
                scale=0.8
            ]

            \node[state] at (4, 4) (1) {3};
            \node[state] at (4, 2) (2) {7};
            \node[state] at (4, 0) (3) {1};
            \node[state] at (4, -2) (4) {4};
            \node[state] at (1, -2) (5) {2};
            \node[state] at (-2, 4) (6) {8};
            \node[state] at (-2, 0) (7) {5};
            \node[state] at (1, 2.85) (8) {6};
            \node[state] at (-2, 2) (9) {10};
            \node[state] at (-2, -2) (10) {9};
            \path[-] (1) edge (2)
                     (2) edge (3)
                     (3) edge (4)
                     (4) edge (5)
                     (5) edge (1)
                     (1) edge (6)
                     (2) edge (7)
                     (3) edge (8)
                     (4) edge (9)
                     (5) edge (10)
                     (6) edge (8)
                     (6) edge (9)
                     (7) edge (9)
                     (7) edge (10)
                     (8) edge (10);
        \end{tikzpicture}}
    \end{minipage}
    \caption{(a) Highly symmetric Petersen graph $ \mc[A]{G} $ with $ \abs{\Aut(\mc[A]{G})} = 120 $. (b) A strongly regular graph $ \mc[B]{G} $ with $ 10 $ vertices. Are the two graphs isomorphic?}
    \label{fig:Petersen graph}
\end{figure}

We can formulate the graph isomorphism problem as an optimization problem of the form
\begin{equation} \label{eq:GI}
    c_\mathcal{P}^{} := \min_{P \in \mathcal{P}(n)} \norm{ P \ts A - B \ts P }_F^2,
\end{equation}
where $ \norm{\cdot}_F $ denotes the Frobenius norm, so that $ \mc[A]{G} \cong \mc[B]{G} $ if and only if $ c_\mathcal{P}^{} = 0 $. The solution of the optimization problem, however, is in general not unique if the graphs have symmetries. Since
\begin{equation*}
	\norm{P \ts A - B \ts P}_F^2
        = \norm{A}_F^2 - 2 \ts \tr \left( A^\top \ts P^\top \ts B \ts P \right) + \norm{B}_F^2,
\end{equation*}
minimizing the Frobenius norm in~\eqref{eq:GI} is equivalent to maximizing the trace of $ A^\top \ts P^\top \ts B \ts P $, which illustrates that the graph isomorphism problem or, more generally, the graph matching problem is closely related to the \emph{quadratic assignment problem} \cite{Lawler63, LDBHQ07}. In what follows, we will analyze different relaxations of the graph isomorphism problem.

\section{Orthogonal relaxation}
\label{sec:orthogonal relaxation}

Let $ I_n $ be the $ n \times n $ identity matrix and let
\begin{equation*}
    \mathcal{O}(n) = \left\{ Q \in \R^{n \times n} \relmiddle{|} Q^\top Q = I_n \right\}
\end{equation*}
denote the set of all orthogonal matrices, which forms a nonlinear disconnected manifold (since $ \det(Q) = \pm 1 $). Instead of solving the combinatorial optimization problem \eqref{eq:GI}, we now consider the relaxed problem
\begin{equation} \label{eq:GIO}
    c_\mathcal{O}^{} := \min_{Q \in \mathcal{O}(n)} \norm{Q \ts A - B \ts Q}_F^2,
\end{equation}
which is also known as the \emph{two-sided orthogonal Procrustes problem}~\cite{Sch68, GD04}.

\begin{definition}[Eigendecomposition] \label{def:eigenvector partitioning}
Given a symmetric matrix $ A \in \R^{n \times n} $, let $ A = U_A \ts \Lambda_A \ts U_A^\top $ be its eigendecomposition, where the columns of $ U_A $ are the eigenvectors and $ \Lambda_A $ is a diagonal matrix containing the eigenvalues sorted in non-increasing order. We define $ \lambda_A = \big[ \lambda_A^{(1)}, \dots, \lambda_A^{(m)} \big] $ to be the row vector containing the $ m \le n $ unique eigenvalues sorted in decreasing order and $ \mu_A = \big[ \mu_A^{(1)}, \dots, \mu_A^{(m)} \big] $ to be the row vector containing the corresponding multiplicities. That is, it holds that
\begin{equation*}
    \sum_{k=1}^m \mu_A^{(k)} = n.
\end{equation*}
We then partition $ U_A $ into
\begin{equation*}
    U_A = \Big[ U_A^{(1)}, \dots, U_A^{(m)} \Big], \quad \text{with} \quad U_A^{(k)} = \Big[u_A^{(k, 1)}, \dots, u_A^{(k, \mu_A^{(k)})}\Big] \in \R^{n \times \mu_A^{(k)}}.
\end{equation*}
\end{definition}

If $ \lambda_A^{(k)} $ is a distinct eigenvalue, then $ U_A^{(k)} $ is the associated eigenvector. If, on the other hand, $ \lambda_A^{(k)} $ is a repeated eigenvalue, then the columns of $ U_A^{(k)} $ form an orthogonal basis of the corresponding eigenspace. Repeated eigenvalues can be caused by symmetries of the graph. Assume that $ \big(\lambda_A^{(k)}, u_A^{(k)}\big) $ is an eigenvalue--eigenvector pair and let $ P \in \Aut(\mc[A]{G}) $, i.e., $ P \ts A = A \ts P $, then
\begin{equation*}
    A \ts u_A^{(k)} = \lambda_A^{(k)} \ts u_A^{(k)} \implies A \ts P \ts u_A^{(k)} = \lambda_A^{(k)} \ts P \ts u_A^{(k)}
\end{equation*}
and $ P \ts u_A^{(k)} $ is also an eigenvector associated with $ \lambda_A^{(k)} $. If these two eigenvectors are linearly independent, this implies that the eigenspace is at least two-dimensional. The larger the automorphism group, the higher the likelihood that the associated permuted eigenvectors are linearly independent. It was shown in \cite{Mowshowitz69, Lov07} that if all eigenvalues of a graph are distinct, then every non-trivial automorphism has order $ 2 $. Nevertheless, repeated eigenvalues are not necessarily related to symmetries. The graph in Figure~\ref{fig:Example graphs}\ts(a), for example, is asymmetric but has repeated eigenvalues. Since it is in practice difficult to determine the size of the automorphism group, a related property called \emph{friendliness}, which can be verified easily, was proposed in \cite{ABK15}.

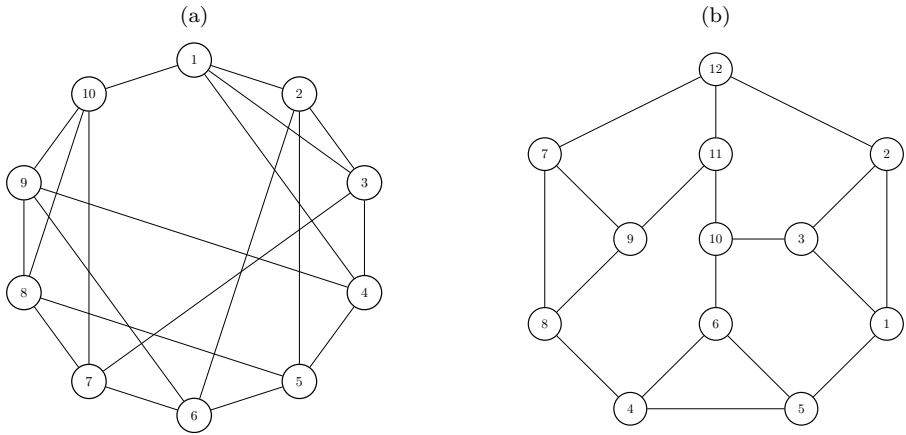
\begin{figure}
    \centering
    \begin{minipage}[t]{0.3\linewidth}
        \centering
        \subfiguretitle{(a)}
        \vspace*{1ex}
        \resizebox{1\textwidth}{!}{%
        \begin{tikzpicture}[
                >= stealth, 
                semithick 
            ]
            \tikzstyle{every state}=[
                draw=black,
                thick,
                fill=white,
                inner sep=0pt,
                text width=6mm,
                align=center,
                scale=0.8
            ]

            \foreach \i in {1,...,10}
                \node[state] at ({4*sin(360*\i/10-36)}, {4*cos(360*\i/10-36)}) (\i) {\i};

            \path[-] (1) edge (2)
                     (1) edge (3)
                     (1) edge (4)
                     (1) edge (10)
                     (2) edge (3)
                     (2) edge (5)
                     (2) edge (6)
                     (3) edge (4)
                     (3) edge (7)
                     (4) edge (5)
                     (4) edge (9)
                     (5) edge (6)
                     (5) edge (8)
                     (6) edge (7)
                     (6) edge (9)
                     (7) edge (8)
                     (7) edge (10)
                     (8) edge (9)
                     (8) edge (10)
                     (9) edge (10);
        \end{tikzpicture}}
    \end{minipage}
    \hspace{10ex}
    \begin{minipage}[t]{0.3\linewidth}
        \centering
        \subfiguretitle{(b)}
        \vspace*{1.8ex}
        \resizebox{1\textwidth}{!}{%
        \begin{tikzpicture}[
                semithick 
            ]
            \tikzstyle{every state}=[
                draw=black,
                thick,
                fill=white,
                inner sep=0pt,
                text width=6mm,
                align=center,
                scale=0.8
            ]

            \node[state] (v1) at (4, -2) {1};
            \node[state] (v2) at (4, 2) {2};
            \node[state] (v3) at (2, 0) {3};
            \node[state] (v4) at (-2, -4) {4};
            \node[state] (v5) at (2, -4) {5};
            \node[state] (v6) at (0, -2) {6};
            \node[state] (v7) at (-4, -2) {8};
            \node[state] (v8) at (-2, 0) {9};
            \node[state] (v9) at (-4, 2) {7};
            \node[state] (v10) at (0, 0) {10};
            \node[state] (v11) at (0, 4) {12};
            \node[state] (v12) at (0, 2) {11};

            \path[-] (v1) edge (v2)
                     (v2) edge (v3)
                     (v3) edge (v1)
                     (v4) edge (v5)
                     (v5) edge (v6)
                     (v6) edge (v4)
                     (v7) edge (v8)
                     (v8) edge (v9)
                     (v9) edge (v7)
                     (v8) edge (v12)
                     (v6) edge (v10)
                     (v3) edge (v10)
                     (v11) edge (v12)
                     (v10) edge (v12)
                     (v4) edge (v7)
                     (v9) edge (v11)
                     (v11) edge (v2)
                     (v1) edge (v5);
        \end{tikzpicture}}
    \end{minipage}
    \caption{(a) Asymmetric unfriendly regular graph with repeated eigenvalues. (b) Asymmetric unfriendly regular Frucht graph with distinct eigenvalues.}
    \label{fig:Example graphs}
\end{figure}

\begin{definition}[Friendly eigenvectors] \label{def:friendly eigenvectors}
An eigenvector $ u_A^{(k)} $ is called \emph{friendly} if $ \big\langle u_A^{(k)}, \mathds{1}_n \big\rangle \ne 0 $, where $ \mathds{1}_n \in \R^n $ denotes the vector of all ones. Correspondingly, a graph $ \mc[A]{G} $ (or its adjacency matrix $ A $) is called \emph{friendly} if all eigenvalues $ \lambda_A^{(k)} $ are distinct and all eigenvectors are friendly.
\end{definition}

Every friendly graph is automatically asymmetric as shown in \cite{ABK15}. The other direction, however, is not true. The Frucht graph \cite{Frucht39} illustrated in Figure~\ref{fig:Example graphs}\ts(b), for instance, is asymmetric but not friendly, see also \cite{FS15}. In fact, any asymmetric regular graph will be highly unfriendly since $ u_A^{(1)} = \mathds{1}_n $ is an eigenvector and the remaining $ n-1 $ eigenvectors are orthogonal to it. In \cite{KS18}, we defined a stronger property, which can sometimes be used to disambiguate the signs of unfriendly eigenvectors.

\begin{definition}[Ambiguous eigenvectors]
We define an eigenvector $ u_A^{(k)} $ to be \emph{ambiguous} if there exists $ P \in \mathcal{P}(n) $ with $ u_A^{(k)} = -P u_A^{(k)} $, i.e., $ u_A^{(k)} $ and $ -u_A^{(k)} $ share the same entries. If all eigenvalues $ \lambda_A^{(k)} $ are distinct and none of the associated eigenvectors are ambiguous, we call the graph $ \mc[A]{G} $ \emph{unambiguous}.
\end{definition}

This definition allows us to assign a canonical sign to unambiguous eigenvectors. An ambiguous eigenvector is automatically unfriendly since $ \big\langle u_A^{(k)}, \mathds{1}_n \big\rangle = -\big\langle P \ts u_A^{(k)}, \mathds{1}_n \big\rangle = -\big\langle u_A^{(k)}, \mathds{1}_n \big\rangle $, which implies $ \big\langle u_A^{(k)}, \mathds{1}_n \big\rangle = 0 $. The vector $ u_A^{(k)} = [1, -2, 0, 1]^\top $, for instance, is unfriendly but not ambiguous.

\begin{definition}[Isospectral graphs]
We call two symmetric matrices $ A $ and $ B $ and hence also the induced graphs $ \mc[A]{G} $ and $ \mc[B]{G} $ \emph{isospectral} if $ \lambda_A = \lambda_B $ and $ \mu_A = \mu_B $.
\end{definition}

That is, isospectral graphs have the same eigenvalues and also their multiplicities are identical. Isomorphic graphs are clearly isospectral.

\begin{theorem} \label{thm:Procrustes}
Assume that $ A $ and $ B $ are isospectral. Let $ R \in \R^{n \times n} $ be a block diagonal matrix of the form
\begin{equation*}
    R =
    \begin{bmatrix}
        R^{(1)} & \\
        & \ddots \\
        & & R^{(m)}
    \end{bmatrix},
\end{equation*}
where $ R^{(k)} \in \mathcal{O}\big(\mu_A^{(k)}\big) $. Then
\begin{equation*}
    Q^*(R) = U_B \ts R \ts U_A^\top
\end{equation*}
minimizes the cost function \eqref{eq:GIO} and all minimizers are of this form.
\end{theorem}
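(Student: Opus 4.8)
The plan is to recast the two-sided orthogonal Procrustes minimization \eqref{eq:GIO} as a trace maximization, reduce it to a purely spectral problem, bound that objective from above, and then verify that $ Q^*(R) $ attains the bound. First I would expand, exactly as in the analogous identity for permutation matrices,
\[
    \norm{Q \ts A - B \ts Q}_F^2 = \norm{A}_F^2 - 2 \ts \tr\!\big(A \ts Q^\top B \ts Q\big) + \norm{B}_F^2,
\]
using $ Q^\top Q = Q \ts Q^\top = I_n $ together with the symmetry of $ A $ and $ B $. Since $ \norm{A}_F^2 $ and $ \norm{B}_F^2 $ are constant, minimizing \eqref{eq:GIO} is equivalent to maximizing $ \tr\!\big(A \ts Q^\top B \ts Q\big) $ over $ \mathcal{O}(n) $. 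Substituting the eigendecompositions and writing $ \Lambda := \Lambda_A = \Lambda_B $ (equal by isospectrality), the change of variable $ W = U_B^\top Q \ts U_A $ is a bijection of $ \mathcal{O}(n) $ onto itself, and a short cyclic-trace computation turns the objective into $ f(W) := \tr\!\big(\Lambda \ts W^\top \Lambda \ts W\big) $. The task is thus to maximize $ f $ over $ W \in \mathcal{O}(n) $.

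Writing $ \Lambda = \diag(\lambda_1, \dots, \lambda_n) $ for the eigenvalues repeated according to multiplicity and sorted non-increasingly, one computes $ f(W) = \sum_{i,k} \lambda_i \lambda_k \ts w_{ki}^2 = \lambda^\top S \ts \lambda $, where $ S $ is the entrywise square of $ W $ and $ \lambda = [\lambda_1, \dots, \lambda_n]^\top $. Orthogonality of $ W $ forces $ S $ to be doubly stochastic, so $ f $ is a linear functional of $ S $ over the Birkhoff polytope; its maximum is attained at a permutation matrix, and the rearrangement inequality then yields $ f(W) \le \sum_i \lambda_i^2 = \tr(\Lambda^2) $. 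Equivalently, $ W^\top \Lambda \ts W $ is orthogonally similar to $ \Lambda $, so von Neumann's trace inequality gives the same bound. I expect this upper bound to be the crux of the argument, since it is where the spectral structure is actually exploited; the doubly-stochastic reduction is the cleanest route, and I would stress that only the bound itself—not its attainability over the whole polytope—is required here.

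Finally I would verify attainment. For $ Q = U_B \ts R \ts U_A^\top $ the substitution gives $ W = U_B^\top U_B \ts R \ts U_A^\top U_A = R $. Since the $ k $-th block of $ R $ has size $ \mu_A^{(k)} $ and is aligned with the eigenvalue $ \lambda_A^{(k)} $, on which $ \Lambda $ acts as the scalar $ \lambda_A^{(k)} $ times the identity, the matrices $ R $ and $ \Lambda $ commute. Hence $ R^\top \Lambda \ts R = \Lambda $ and $ f(R) = \tr\!\big(\Lambda \ts R^\top \Lambda \ts R\big) = \tr(\Lambda^2) $, so $ Q^*(R) $ meets the upper bound and therefore minimizes \eqref{eq:GIO}, with optimal value $ c_\mathcal{O}^{} = \norm{A}_F^2 + \norm{B}_F^2 - 2 \ts \tr(\Lambda^2) $.
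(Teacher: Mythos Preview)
Your argument is correct, but it takes a substantially more elaborate route than the paper. The paper simply observes that $Q^*(R)$ is orthogonal and then computes directly
\[
    Q^*(R)\ts A - B\ts Q^*(R) = U_B\big(R\ts\Lambda_A - \Lambda_B\ts R\big)U_A^\top = 0,
\]
using that the block structure of $R$ matches the eigenspace decomposition so that $R\ts\Lambda_A = \Lambda_A R = \Lambda_B R$. Since the cost function is a squared norm and hence nonnegative, achieving the value $0$ is trivially optimal, and $c_\mathcal{O}^{}=0$ follows immediately. No upper bound, trace inequality, or Birkhoff argument is needed. Your approach instead reduces the problem to maximizing $\tr(\Lambda W^\top\Lambda W)$ over $\mathcal{O}(n)$, bounds this by $\tr(\Lambda^2)$ via the doubly-stochastic/rearrangement (or von Neumann) inequality, and then checks that $W=R$ attains the bound. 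This is valid and in fact proves more---it yields the optimal value of the two-sided Procrustes problem even when $A$ and $B$ are not isospectral---but for the theorem as stated it is considerable overkill: once you notice that the minimum of $\norm{\cdot}_F^2$ is zero, the whole trace-maximization machinery becomes unnecessary. Incidentally, your closing expression $c_\mathcal{O}^{}=\norm{A}_F^2+\norm{B}_F^2-2\ts\tr(\Lambda^2)$ does equal zero under the isospectral hypothesis, since $\norm{A}_F^2=\norm{B}_F^2=\tr(\Lambda^2)$; it would be worth stating this explicitly.
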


\begin{proof}
First, $ Q^*(R) $ is a product of orthogonal matrices and hence orthogonal. We have
\begin{equation*}
\renewcommand*{\arraystretch}{1.4}
\begin{split}
    R \ts \Lambda_A &=
    \begin{bmatrix}
        R^{(1)} & \\
        & \ddots \\
        & & R^{(m)}
    \end{bmatrix}
    \begin{bmatrix}
        \lambda_A^{(1)} I_{\mu_A^{(1)}} & \\
        & \ddots \\
        & & \lambda_A^{(m)} I_{\mu_A^{(m)}}
    \end{bmatrix} \\
    &=
    \begin{bmatrix}
        \lambda_A^{(1)} I_{\mu_A^{(1)}} & \\
        & \ddots \\
        & & \lambda_A^{(m)} I_{\mu_A^{(m)}}
    \end{bmatrix}
    \begin{bmatrix}
        R^{(1)} & \\
        & \ddots \\
        & & R^{(m)}
    \end{bmatrix}
    = \Lambda_A R
\end{split}
\end{equation*}
and since $ A $ and $ B $ are by definition isospectral also $ R \ts \Lambda_A = \Lambda_B R $. It follows that
\begin{align*}
    Q^*(R) \ts A - B \ts Q^*(R)
        &= U_B \ts R \ts U_A^\top A - B \ts U_B \ts R \ts U_A^\top \\
        &= U_B \ts R \ts \Lambda_A U_A^\top - U_B \ts \Lambda_B \ts R \ts U_A^\top \\
        &= U_B \big(R \ts \Lambda_A - \Lambda_B \ts R\big) U_A^\top \\
        &= 0.
\end{align*}
That is, $ Q^*(R) $ minimizes the cost function and $ c_\mathcal{O}^{} = 0 $. To show that all minimizers must be of this form, assume now that $ c_\mathcal{O}^{} = 0 $, then
\begin{equation*}
    B = Q \ts A \ts Q^\top = Q \ts U_A \ts \Lambda_A \ts U_A^\top \ts Q^\top = (Q \ts U_A) \ts \Lambda_B \ts (Q \ts U_A)^\top,
\end{equation*}
i.e., the columns of $ Q \ts U_A $ are eigenvectors of $ B $, which are determined up to basis rotations (in the corresponding eigenspaces). Writing
\begin{equation*}
    (Q \ts U_A) = \Big[ \widetilde{U}_B^{(1)}, \dots, \widetilde{U}_B^{(m)} \Big],
\end{equation*}
it follows that $ \widetilde{U}_B^{(k)} = U_B^{(k)} \ts R^{(k)} $, where $ R^{(k)} \in \mathcal{O}\big(\mu_A^{(k)}\big) $. Overall, this implies that $ Q \ts U_A = U_B \ts R $ and thus $ Q = U_B \ts R \ts U_A^\top $, where $ R $ is an orthogonal block diagonal matrix.
\end{proof}

Note that we did not explicitly use the orthogonality of the matrices $ R^{(k)} $, which is only required so that $ Q^*(R) $ remains orthogonal. A similar proof can be found in \cite{LM79}. For a more detailed discussion of the equivariance properties of self-adjoint matrices, see also \cite{DGGK19}.

\begin{remark} We have to distinguish between different cases:
\begin{enumerate}[leftmargin=3.5ex, itemsep=0ex, topsep=0.5ex, label=(\roman*)]
\item Assume that all eigenvalues of $ A $ and $ B $ are distinct, then $ m = n $ and
\begin{equation*}
    R =
    \begin{bmatrix}
        \pm 1 \\
        & \ddots \\
        & & \pm 1
    \end{bmatrix},
\end{equation*}
i.e., there exist $ 2^n $ orthogonal matrices $ Q^*(R) $ that minimize the relaxed cost function. This is the special case typically considered in the literature, see, e.g., \cite{Sch68, Umeyama88}.
\item If the eigenvalues are not all distinct, then there are
\begin{equation*}
    \sum_{k=1}^m \frac{\mu_A^{(k)} \big(\mu_A^{(k)} - 1\big)}{2}
\end{equation*}
(continuous) degrees of freedom\footnote{The set $ \mathcal{O}\big(\mu_A^{(k)}\big) $ is a compact Lie group of dimension $ \frac{\mu_A^{(k)} \big(\mu_A^{(k)} - 1\big)}{2} $.} and the search space, i.e., the set of all minimizers of~\eqref{eq:GIO}, is significantly more complex.
\end{enumerate}
\end{remark}

A simple way to check whether two graphs are potentially isomorphic is to first solve the orthogonal Procrustes problem \eqref{eq:GIO}. If $ c_\mathcal{O}^{} > 0 $, the graphs cannot be isomorphic. However, if $ c_\mathcal{O}^{} = 0 $, this only implies that the graphs are isospectral (since $ A $ and $ B $ are in this case similar) but not necessarily isomorphic (i.e., $ c_\mathcal{P}^{} $ might still be greater than zero). The graph isomorphism problem can thus be reformulated as: Does the set of feasible solutions $ Q^*(R) $ contain at least one permutation matrix? Finding such a matrix, assuming the graphs are isomorphic, is a difficult nonlinear optimization problem. One potential way to obtain a valid solution of the graph isomorphism problem is to penalize orthogonal matrices that are not permutation matrices. This approach was used in \cite{ZP08} to solve the graph matching problem.

\section{Doubly stochastic relaxation}

We will now consider a different relaxation of the graph isomorphism problem. Let
\begin{equation*}
    \mathcal{D}(n) = \left\{ X \in \R^{n \times n} \relmiddle{|} X \ge 0, X \mathds{1}_n = \mathds{1}_n, \mathds{1}_n^\top X = \mathds{1}_n^\top \right\}
\end{equation*}
be the set of doubly stochastic matrices, which is the convex hull of the set of all permutation matrices~$ \mathcal{P}(n) $. This is also known as \emph{Birkhoff's theorem} \cite{Birkhoff46}. The doubly stochastic relaxation of the graph isomorphism problem is given by
\begin{equation} \label{eq:GIDS}
    c_\mathcal{D}^{} := \min_{X \in \mathcal{D}(n)} \norm{X \ts A - B \ts X}_F^2.
\end{equation}
We again have the property that the graphs cannot be isomorphic if $ c_\mathcal{D}^{} > 0 $.

\subsection{Solutions of the relaxed problem}

The convex relaxation of the problem admits additional solutions as the following lemma shows.

\begin{lemma} \label{lem:convex combinations}
Given two isomorphic graphs $ \mc[A]{G} \cong \mc[B]{G} $. Let $ P_1, \dots, P_\ell \in \Iso(\mc[A]{G}, \mc[B]{G}) $, then any convex combination of the form
\begin{equation*}
    X = \sum_{k=1}^\ell \gamma_k \ts P_k, \quad \text{with }\sum_{k=1}^\ell \gamma_k = 1,
\end{equation*}
minimizes \eqref{eq:GIDS}.
\end{lemma}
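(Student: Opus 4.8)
The plan is to verify the two conditions that characterize a minimizer of \eqref{eq:GIDS}: that $X$ is feasible, i.e., $X \in \mathcal{D}(n)$, and that it attains the minimal objective value $c_\mathcal{D}^{}$. The first observation is that isomorphism $\mc[A]{G} \cong \mc[B]{G}$ forces $c_\mathcal{D}^{} = 0$. Indeed, any $P \in \Iso(\mc[A]{G}, \mc[B]{G})$ is a permutation matrix satisfying $P \ts A = B \ts P$, hence lies in $\mathcal{D}(n)$ and yields objective value $0$; since the Frobenius norm is nonnegative, the minimum over $\mathcal{D}(n)$ is exactly $0$. It therefore suffices to show that $X$ is feasible and that $X \ts A = B \ts X$.

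For feasibility, I would invoke the convexity of $\mathcal{D}(n)$. By Birkhoff's theorem the set of doubly stochastic matrices is the convex hull of $\mathcal{P}(n)$, and each $P_k$ is a permutation matrix, so a genuine convex combination (with $\gamma_k \ge 0$ in addition to $\sum_k \gamma_k = 1$) of points in the convex set $\mathcal{D}(n)$ again lies in $\mathcal{D}(n)$. Concretely, nonnegativity of the $\gamma_k$ guarantees $X \ge 0$ entrywise, while the row and column sum constraints are preserved because $X \ts \mathds{1}_n = \sum_k \gamma_k \ts P_k \ts \mathds{1}_n = \big(\sum_k \gamma_k\big) \mathds{1}_n = \mathds{1}_n$ and likewise for $\mathds{1}_n^\top X$.

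The remaining step is a direct computation exploiting linearity of matrix multiplication together with the defining relation $P_k \ts A = B \ts P_k$ for each $k$:
\begin{equation*}
    X \ts A = \sum_{k=1}^\ell \gamma_k \ts P_k \ts A = \sum_{k=1}^\ell \gamma_k \ts B \ts P_k = B \sum_{k=1}^\ell \gamma_k \ts P_k = B \ts X.
\end{equation*}
Hence $\norm{X \ts A - B \ts X}_F^2 = 0 = c_\mathcal{D}^{}$, which is precisely the claim that $X$ minimizes \eqref{eq:GIDS}.

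I do not anticipate any serious obstacle, since the argument rests only on linearity and the convexity of the feasible set. The one point deserving care is feasibility: the statement calls $X$ a convex combination but records only the constraint $\sum_k \gamma_k = 1$ explicitly, so I would make the nonnegativity assumption $\gamma_k \ge 0$ explicit, as it is exactly what is needed to keep $X$ inside $\mathcal{D}(n)$. An affine combination with some negative weight could violate $X \ge 0$ and thus fall outside the feasible set, even though the identity $X \ts A = B \ts X$ would continue to hold.
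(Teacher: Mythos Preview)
Your proof is correct and follows essentially the same approach as the paper: verify that $X\in\mathcal{D}(n)$ by convexity and then use linearity to obtain $X\ts A - B\ts X = \sum_k \gamma_k (P_k\ts A - B\ts P_k) = 0$. Your explicit remark that the nonnegativity $\gamma_k\ge 0$ is what the term ``convex combination'' is supplying (and is needed for feasibility) is a useful clarification, since the displayed constraint records only $\sum_k\gamma_k=1$.
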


\begin{proof}
A convex combination of doubly stochastic matrices is again doubly stochastic so that $ X \in \mathcal{D}(n) $. We have
\begin{equation*}
    X \ts A - B \ts X = \sum_{k=1}^\ell \gamma_k \big(P_k \ts A - B \ts P_k\big) = 0
\end{equation*}
as required.
\end{proof}

It is important to note here that all convex combinations of isomorphisms are solutions of the relaxed problem, but not all doubly stochastic matrices that minimize \eqref{eq:GIDS} are automatically convex combinations of isomorphisms as the following example shows.

\begin{example} \label{ex:Frucht graph feasbible solutions}
As mentioned above, the automorphism group of the Frucht graph contains only the identity matrix, but $ X^* = \frac{1}{n} \mathds{1} \mathds{1}^\top $ is a solution of the relaxed problem. We will see that there is in fact an entire $ 11 $-dimensional affine subspace of feasible doubly stochastic solutions. \exampleSymbol
\end{example}

This is related to the aforementioned friendliness property, but most of the eigenvectors are in this case not ambiguous so that we can eliminate spurious solutions as we will show below.

\subsection{Vectorization of the problem}

We now rewrite the optimization problem \eqref{eq:GIDS} as
\begin{equation} \label{eq:GIDS trace}
\begin{split}
    \norm{X \ts A - B \ts X}_F^2
        &= \tr\left( A^2 \ts X^\top X \right) - 2 \tr\left(A \ts X^\top B \ts X \right) + \tr\left(X^\top B^2 \ts X \right)
\end{split}
\end{equation}
using the cyclic property of the trace and the fact that $ A $ and $ B $ are symmetric. Note that here the first term and last term are not independent of $ X $ as they are for the orthogonal relaxation since in general $ X^\top X \ne I_n \ne X \ts X^\top $.

\begin{definition}[Vectorization]
Given a matrix $ X \in \R^{n \times n} $, the corresponding \emph{vectorized} matrix $ x \in \R^{n^2} $ is defined by
\begin{equation*}
    x = \vec(X) =
    \begin{bmatrix}
        X_1 \\
        \vdots \\
        X_n
    \end{bmatrix},
\end{equation*}
where $ X_1, \dots, X_n $ denote the columns of $ X $.
\end{definition}

\begin{lemma} \label{lem:trace as quadratic form}
Let $ A, B, X \in \R^{n \times n} $, then
\begin{equation*}
    \tr(A^\top \ts X^\top B \ts X) = x^\top (A \otimes B) \ts x,
\end{equation*}
where $ x = \vec(X) $ and $ \otimes $ denotes the Kronecker product.
\end{lemma}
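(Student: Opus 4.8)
The plan is to reduce everything to two standard facts about the column-stacking vectorization defined above: the Frobenius inner-product identity $\tr(M^\top N) = \vec(M)^\top \vec(N)$, and the mixed vec--Kronecker identity $\vec(M X N) = (N^\top \otimes M)\,\vec(X)$. Both are elementary consequences of stacking the columns of a matrix, and I would either cite them or verify them quickly on the rank-one matrices $X = e_i e_j^\top$.

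First I would rewrite the trace so that both matrix arguments appear as a transpose-times-matrix pair. Since $(X A)^\top = A^\top X^\top$, we have
\[
    \tr(A^\top X^\top B X) = \tr\big((X A)^\top (B X)\big),
\]
and the Frobenius identity turns this into $\vec(X A)^\top \vec(B X)$.

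Next I would apply the vec--Kronecker identity to each factor, writing $X A = I_n X A$ and $B X = B X I_n$ to read off
\[
    \vec(X A) = (A^\top \otimes I_n)\, x, \qquad \vec(B X) = (I_n \otimes B)\, x.
\]
Substituting gives $x^\top (A^\top \otimes I_n)^\top (I_n \otimes B)\, x = x^\top (A \otimes I_n)(I_n \otimes B)\, x$, and the mixed-product rule $(A \otimes I_n)(I_n \otimes B) = (A I_n) \otimes (I_n B) = A \otimes B$ collapses this to the claimed $x^\top (A \otimes B)\, x$.

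I expect no genuine obstacle here; the only thing that needs care is the bookkeeping of transposes together with the column-stacking convention, since a row-stacking convention would instead produce $B \otimes A$. As a sanity check, or as a fully self-contained alternative, I would verify the identity directly at the level of indices: with $x_{(j-1)n+i} = X_{ij}$ and $(A \otimes B)_{(j-1)n+i,\,(l-1)n+k} = A_{jl} B_{ik}$, both sides expand to $\sum_{i,j,k,l} A_{jl} B_{ik} X_{ij} X_{kl}$, confirming the equality without appeal to any named identity.
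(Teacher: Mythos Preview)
Your proof is correct. Your primary route differs from the paper's: you invoke the standard vec--Kronecker toolbox (the Frobenius identity $\tr(M^\top N)=\vec(M)^\top\vec(N)$, the rule $\vec(MXN)=(N^\top\otimes M)\vec(X)$, and the mixed-product property), which gives a clean three-line reduction but relies on named identities the reader must trust or look up. The paper instead does the bare-hands index computation, writing out $(A\otimes B)_{(i_1-1)n+i_2,\,(j_1-1)n+j_2}=A_{i_1 j_1}B_{i_2 j_2}$ and $x_{(j-1)n+i}=X_{ij}$, then matching the quadruple sums directly. That is exactly what you sketch at the end as your ``self-contained alternative,'' so in effect you have presented both the high-level and the elementary arguments; the paper opts only for the latter, presumably to keep everything self-contained given the conventions it has fixed.
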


\begin{proof}
We include the proof for the sake of completeness. The entries of $ A \otimes B $ are by definition given by
\begin{equation*}
	(A \otimes B)_{(i_1-1) n + i_2, \ts (j_1-1) n + j_2} = A_{i_1, j_1} \ts B_{i_2,j_2},
\end{equation*}
whereas the entries of the vectorization of $ X $, denoted by $ x $, and $ X $ itself are related by
\begin{equation*}
	x_{(j-1) n + i} = \vec(X)_{(j-1) n + i} = X_{i,j}.
\end{equation*}
These index mappings can be viewed as row- and column-major orderings, respectively. By combining the above relations, we obtain
\begin{equation*}
	\begin{split}
		\tr(A^\top \ts X^\top B \ts X) &= \sum_{i_1=1}^n \sum_{i_2=1}^n \sum_{j_1=1}^n \sum_{j_2=1}^n A^\top_{j_1,i_1} \ts X^\top_{i_1, i_2} \ts B_{i_2,j_2} \ts X_{j_2,j_1} \\
		&= \sum_{i_1=1}^n \sum_{i_2=1}^n \sum_{j_1=1}^n \sum_{j_2=1}^n X_{i_2, i_1} \ts A_{i_1,j_1} \ts B_{i_2,j_2} \ts X_{j_2,j_1} \\
		&= \sum_{i_1=1}^n \sum_{i_2=1}^n \sum_{j_1=1}^n \sum_{j_2=1}^n x_{(i_1-1)n + i_2} \ts (A \otimes B)_{(i_1-1) n + i_2, \ts (j_1-1) n + j_2} \ts x_{(j_1-1)n + j_2} \\ &= \sum_{i=1}^{n^2} \sum_{j=1}^{n^2} x_i \ts (A \otimes B)_{i, j} \ts x_j,
	\end{split}
\end{equation*}
which concludes the proof.
\end{proof}

If $ A $ is symmetric, we have $ \tr(A \ts X^\top B \ts X) = x^\top (A \otimes B) \ts x $. The row and column sum constraints can be reformulated as $ C \ts x = d $, with
\begin{equation*}
    C =
    \begin{bmatrix}
        \mathds{1}_n^\top \\
        & \ddots \\
        & & \mathds{1}_n^\top \\[0.5ex]
        e_1^\top & \dots & e_1^\top \\
        \vdots & \ddots & \vdots \\
        e_n^\top & \dots & e_n^\top \\
    \end{bmatrix} \in \R^{2 \ts n \times n^2}
\end{equation*}
and $ d = \mathds{1}_{2 \ts n} $, where $ e_i \in \R^n $ denotes the $ i $th unit vector. This defines a $ (2 \ts n - 1) $-dimensional affine subspace.

\begin{lemma} \label{lem:quadratic form}
The optimization problem \eqref{eq:GIDS} can be written as
\begin{equation*}
    \min_{\substack{x \ge 0 \\[0.4ex] C x = d}} x^\top H x,
\end{equation*}
with
\begin{equation*}
    H = \big(A^2 \otimes I_n\big) - 2 \ts \big(A \otimes B\big) + \big(I_n \otimes B^2\big) = \left( A \otimes I_n - I_n \otimes B \right)^2,
\end{equation*}
where again $ x = \vec(X) $.
\end{lemma}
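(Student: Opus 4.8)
The plan is to translate each of the three trace terms in the expansion \eqref{eq:GIDS trace} into a quadratic form in $ x = \vec(X) $ by repeated use of Lemma~\ref{lem:trace as quadratic form}, to collect them into a single matrix $ H $, and then to verify the Kronecker factorization algebraically. The constraint part requires nothing new: the row and column sum conditions have already been rewritten as $ C x = d $ in the paragraph preceding the statement, and the nonnegativity $ X \ge 0 $ is just $ x \ge 0 $ entrywise, so the feasible set of \eqref{eq:GIDS} is exactly $ \{ x \ge 0,\ C x = d \} $.

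For the objective, I would first note that $ A $, $ B $, $ A^2 $, $ B^2 $, and $ I_n $ are all symmetric, which makes the transpose appearing in $ \tr(A^\top X^\top B X) $ harmless and lets me apply Lemma~\ref{lem:trace as quadratic form} termwise. The middle term is $ \tr(A \ts X^\top B \ts X) = x^\top (A \otimes B) \ts x $ directly. The first term becomes $ \tr(A^2 \ts X^\top X) = \tr(A^2 \ts X^\top I_n X) = x^\top (A^2 \otimes I_n) \ts x $ on taking $ I_n $ in the role of the inner matrix, and the last term becomes $ \tr(X^\top B^2 \ts X) = \tr(I_n \ts X^\top B^2 \ts X) = x^\top (I_n \otimes B^2) \ts x $ on taking $ I_n $ in the role of the outer matrix. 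Summing the three contributions with the coefficients from \eqref{eq:GIDS trace} yields $ \norm{X \ts A - B \ts X}_F^2 = x^\top H x $ with $ H = (A^2 \otimes I_n) - 2 \ts (A \otimes B) + (I_n \otimes B^2) $, which is the first expression for $ H $.

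It then remains to establish $ H = (A \otimes I_n - I_n \otimes B)^2 $. I would expand the square and invoke the mixed-product property $ (M_1 \otimes N_1)(M_2 \otimes N_2) = (M_1 M_2) \otimes (N_1 N_2) $. This gives $ (A \otimes I_n)^2 = A^2 \otimes I_n $ and $ (I_n \otimes B)^2 = I_n \otimes B^2 $, while the two cross terms satisfy $ (A \otimes I_n)(I_n \otimes B) = A \otimes B = (I_n \otimes B)(A \otimes I_n) $. The crucial point is that these two factors commute, so the cross terms coincide and combine into $ -2 \ts (A \otimes B) $, reproducing $ H $ exactly.

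Since the algebra is elementary, there is no serious obstacle. The only places that demand care are the bookkeeping of the transposes and of which argument of Lemma~\ref{lem:trace as quadratic form} plays which role, together with the observation that $ A \otimes I_n $ and $ I_n \otimes B $ commute, which is precisely what merges the two cross terms into a single factor of $ 2 $.
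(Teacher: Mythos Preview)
Your proposal is correct and follows essentially the same approach as the paper: the paper's proof is a single sentence invoking Lemma~\ref{lem:trace as quadratic form} and the trace expansion~\eqref{eq:GIDS trace}, and you have simply written out those steps explicitly, together with the (straightforward) verification of the factorization $H = (A \otimes I_n - I_n \otimes B)^2$ that the paper states but does not spell out.
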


\begin{proof}
The result follows immediately from Lemma~\ref{lem:trace as quadratic form} and the trace formulation \eqref{eq:GIDS trace}.
\end{proof}

This is a convex but in general not strictly convex optimization problem. In particular, all matrices $ X $ defined in Lemma~\ref{lem:convex combinations} are optimal solutions.

\begin{theorem} \label{thm:properties of H}
It holds that:
\begin{enumerate}[leftmargin=3.5ex, itemsep=0ex, topsep=0.5ex, label=(\roman*)]
\item The matrix $ H $ is symmetric and positive semi-definite.
\item Let $ \lambda_A^{(i)} $ and $ \lambda_B^{(j)} $ be eigenvalues of $ A $ and $ B $, respectively, and $ u_A^{(i, \alpha)} $ and $ u_B^{(j, \beta)} $ associated eigenvectors, where $ 1 \le \alpha \le \mu_A^{(i)} $ and $ 1 \le \beta \le \mu_B^{(j)} $, then
\begin{equation*}
    \lambda_H^{(i,j)} = \big(\lambda_A^{(i)} - \lambda_B^{(j)} \big)^2
\end{equation*}
is an eigenvalue of $ H $ and the corresponding eigenvector is given by
\begin{equation*}
    u_H^{(i,\alpha,j,\beta)} = u_A^{(i,\alpha)} \otimes u_B^{(j,\beta)}.
\end{equation*}
\end{enumerate}
\end{theorem}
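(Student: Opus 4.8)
The plan is to exploit the factorization $H = M^2$ established in Lemma~\ref{lem:quadratic form}, where I abbreviate $M := A \otimes I_n - I_n \otimes B$, and to reduce both assertions to the spectral properties of $M$.

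For part (i), I would first argue that $M$ is symmetric. Since $A$ and $B$ are symmetric and the Kronecker product of symmetric matrices is again symmetric, both $A \otimes I_n$ and $I_n \otimes B$ are symmetric, hence so is their difference $M$. Consequently $H = M^2 = M^\top M$, and for any $x \in \R^{n^2}$ we obtain $x^\top H x = \norm{M x}^2 \ge 0$, which shows that $H$ is positive semi-definite. Symmetry of $H$ follows directly from $H^\top = (M^\top)^2 = M^2 = H$.

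For part (ii), the key tool is the mixed-product property $ (P \otimes Q)(R \otimes S) = (P R) \otimes (Q S) $ of the Kronecker product. Writing $ u := u_A^{(i,\alpha)} $ and $ v := u_B^{(j,\beta)} $, this gives $ (A \otimes I_n)(u \otimes v) = (A u) \otimes v = \lambda_A^{(i)} (u \otimes v) $ and $ (I_n \otimes B)(u \otimes v) = u \otimes (B v) = \lambda_B^{(j)} (u \otimes v) $. Subtracting shows that $ u \otimes v $ is an eigenvector of $ M $ with eigenvalue $ \lambda_A^{(i)} - \lambda_B^{(j)} $. Applying $ M $ a second time then yields $ H (u \otimes v) = M^2 (u \otimes v) = \big(\lambda_A^{(i)} - \lambda_B^{(j)}\big)^2 (u \otimes v) $, which is exactly the claimed eigenpair.

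I do not anticipate a genuine obstacle here, as the argument reduces to a one-line Kronecker computation once $ H = M^2 $ is available. The only point worth stating explicitly is that these eigenpairs are exhaustive: because $ A $ and $ B $ are symmetric, their eigenvectors furnish orthonormal bases of $ \R^n $, so the $ n^2 $ products $ u_A^{(i,\alpha)} \otimes u_B^{(j,\beta)} $ form an orthonormal basis of $ \R^{n^2} $ and thereby diagonalize $ H $ completely. This confirms that the list in (ii) accounts for the entire spectrum of $ H $ and, incidentally, gives an alternative derivation of the positive semi-definiteness in (i) through the manifestly nonnegative eigenvalues $ \big(\lambda_A^{(i)} - \lambda_B^{(j)}\big)^2 $.
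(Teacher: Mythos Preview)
Your argument is correct and mirrors the paper's proof almost exactly: both use the factorization $H = (A \otimes I_n - I_n \otimes B)^2$ from Lemma~\ref{lem:quadratic form} together with the mixed-product property of the Kronecker product to show that $u_A^{(i,\alpha)} \otimes u_B^{(j,\beta)}$ is an eigenvector of the inner factor with eigenvalue $\lambda_A^{(i)} - \lambda_B^{(j)}$, and then square. The only cosmetic differences are that the paper phrases positive semi-definiteness via $x^\top H x = \norm{XA - BX}_F^2$ rather than $\norm{Mx}^2$, and that your additional remark on exhaustiveness (the $n^2$ tensor products forming an orthonormal basis) is not stated explicitly in the paper's proof but is implicitly used in the subsequent Corollary~\ref{cor:rank of H}.
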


\begin{proof} ~
\begin{enumerate}[leftmargin=3.5ex, itemsep=0ex, topsep=0.5ex, label=(\roman*)]
\item All three terms of $ H $ are Kronecker products of symmetric matrices and thus also symmetric and so is their sum. The matrix $ H $ is positive semi-definite since
\begin{equation*}
    x^\top H x = \norm{X \ts A - B \ts X}_F^2 \ge 0
\end{equation*}
for all $ x $. Alternatively, this also follows from (ii).
\item We have
\begin{align*}
	H \ts u_H^{(i,\alpha,j,\beta)} &= \big( A \otimes I_n - I_n \otimes B\big)^2 \big(u_A^{(i,\alpha)} \otimes u_B^{(j,\beta)}\big) \\
	&= \big( A \otimes I_n - I_n \otimes B \big) \big(\lambda_A^{(i)} \ts u_A^{(i,\alpha)} \otimes u_B^{(j,\beta)} - \lambda_B^{(j)} \ts u_A^{(i,\alpha)} \otimes u_B^{(j,\beta)} \big) \\
	&= \big(\lambda_A^{(i)} - \lambda_B^{(j)}\big) \ts \big( A \otimes I_n - I_n \otimes B \big) \big( u_A^{(i,\alpha)} \otimes u_B^{(j,\beta)}\big) \\
	&= \big(\lambda_A^{(i)} - \lambda_B^{(j)}\big)^2 \ts \big( u_A^{(i,\alpha)} \otimes u_B^{(j,\beta)}\big) \\
	&= \lambda_H^{(i,j)} \ts u_H^{(i,\alpha,j,\beta)}.
\end{align*}
That is, all eigenvalues of $ H $ are non-negative. \qedhere
\end{enumerate}
\end{proof}

\subsection{Isospectral graphs}

If $ A $ and $ B $ are isospectral, Theorem~\ref{thm:properties of H} immediately implies that the rank of $ H $ is at most $ n (n-1) $ since at least $ n $ of the eigenvalues of $ H $ are zero. If the matrices are isospectral and also have repeated eigenvalues, the rank will be accordingly lower.

\begin{corollary} \label{cor:rank of H}
Assume that $ A $ and $ B $ are isospectral. Let $ \lambda_A = \lambda_B $ be the eigenvalues with multiplicities $ \mu_A = \mu_B $ (see Definition~\ref{def:eigenvector partitioning}). Then
\begin{equation*}
     \rank(H) = n^2 - \sum_{k=1}^m \big(\mu_A^{(k)} \big)^2.
\end{equation*}
Furthermore, the null space of the matrix $ H $ is given by
\begin{equation*}
    \operatorname{null}(H) = \mspan\left\{ u_A^{(k,\alpha)} \otimes u_B^{(k,\beta)} \relmiddle{|} 1 \le \alpha, \beta \le \mu_{A}^{(k)}, k = 1, \dots, m \right\}.
\end{equation*}
\end{corollary}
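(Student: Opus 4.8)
The plan is to leverage Theorem~\ref{thm:properties of H} to produce a \emph{complete} eigendecomposition of $H$ and then simply read off which eigenvalues vanish. First I would observe that, since $U_A$ and $U_B$ are orthogonal, their columns $\big\{ u_A^{(i,\alpha)} \big\}$ and $\big\{ u_B^{(j,\beta)} \big\}$ each form an orthonormal basis of $\R^n$. Consequently the $n^2$ Kronecker products $u_A^{(i,\alpha)} \otimes u_B^{(j,\beta)}$ form an orthonormal basis of $\R^{n^2}$, and by part (ii) of Theorem~\ref{thm:properties of H} each of them is an eigenvector of $H$ with eigenvalue $\big(\lambda_A^{(i)} - \lambda_B^{(j)}\big)^2$. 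This already exhibits a full orthonormal eigenbasis, so no further work is needed to diagonalize $H$; the only remaining task is to count the zero eigenvalues.

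Next I would determine exactly when $\big(\lambda_A^{(i)} - \lambda_B^{(j)}\big)^2 = 0$, i.e., when $\lambda_A^{(i)} = \lambda_B^{(j)}$. Here the isospectrality hypothesis enters decisively: since $\lambda_A = \lambda_B$ and the entries of each of these row vectors are the \emph{distinct} unique eigenvalues sorted in decreasing order, we have $\lambda_A^{(i)} = \lambda_B^{(j)}$ if and only if $i = j$. Thus the zero eigenvalues of $H$ correspond precisely to the eigenvectors $u_A^{(k,\alpha)} \otimes u_B^{(k,\beta)}$ with a common index $k$, where $1 \le \alpha \le \mu_A^{(k)}$ and $1 \le \beta \le \mu_B^{(k)} = \mu_A^{(k)}$.

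Since the eigenbasis is orthonormal, the null space of $H$ is exactly the span of these zero-eigenvalue eigenvectors, which is the asserted description of $\operatorname{null}(H)$. Counting them, for each $k$ there are $\mu_A^{(k)}$ choices of $\alpha$ and $\mu_A^{(k)}$ choices of $\beta$, giving $\sum_{k=1}^m \big(\mu_A^{(k)}\big)^2$ basis vectors; hence $\dim \operatorname{null}(H) = \sum_{k=1}^m \big(\mu_A^{(k)}\big)^2$ and $\rank(H) = n^2 - \sum_{k=1}^m \big(\mu_A^{(k)}\big)^2$ by the rank--nullity theorem.

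The calculation is essentially bookkeeping once the eigenbasis is in hand, so the only point requiring genuine care — the potential obstacle — is the completeness and orthonormality of the family $\big\{ u_A^{(i,\alpha)} \otimes u_B^{(j,\beta)} \big\}$. I would justify this via the standard fact that if $\{p_i\}$ and $\{q_j\}$ are orthonormal bases of $\R^n$, then $\{p_i \otimes q_j\}$ is an orthonormal basis of $\R^{n^2}$, which follows from $\innerprod{p_i \otimes q_j}{p_{i'} \otimes q_{j'}} = \innerprod{p_i}{p_{i'}} \innerprod{q_j}{q_{j'}}$ together with a dimension count. This guarantees we have found all $n^2$ eigenvectors and that the span appearing in the statement is genuinely the full null space rather than merely a subspace of it.
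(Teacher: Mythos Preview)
Your proposal is correct and follows essentially the same approach as the paper: use the eigenpairs from Theorem~\ref{thm:properties of H}, identify which eigenvalues vanish via isospectrality, count them, and invoke rank--nullity. You are in fact more explicit than the paper about why the family $\{u_A^{(i,\alpha)} \otimes u_B^{(j,\beta)}\}$ is a complete orthonormal eigenbasis of $\R^{n^2}$, which is exactly what justifies that the listed vectors span all of $\operatorname{null}(H)$ rather than merely a subspace.
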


\begin{proof}
For each $ \lambda_A^{(k)} = \lambda_B^{(k)} $ there are $ \mu_A^{(k)} = \mu_B^{(k)} $ linearly independent eigenvectors and hence $ \big(\mu_A^{(k)}\big)^2 $ possible combinations for which the associated eigenvalues of $ H $ are zero. The corresponding eigenvectors span the null space of $ H $. The result then follows from the rank--nullity theorem.
\end{proof}

\begin{example}
Assume that $ \lambda_A = [2, 1, 0] = \lambda_B $ with multiplicities $ \mu_A = [2, 1, 3] = \mu_B $, then the eigenvalues of~$ H $, written in matrix form, are given by
\begin{align*}
    & \hspace*{2.4ex} \scalebox{0.8}{2} \hspace*{2.4ex} \scalebox{0.8}{2} \hspace*{2.4ex} \scalebox{0.8}{1} \hspace*{2.5ex} \scalebox{0.8}{0} \hspace*{2.4ex} \scalebox{0.8}{0} \hspace*{2.4ex} \scalebox{0.8}{0} \\[-1ex]
    \begin{matrix}
        \scalebox{0.8}{2} \\[0.01ex]
        \scalebox{0.8}{2} \\[0.01ex]
        \scalebox{0.8}{1} \\[0.01ex]
        \scalebox{0.8}{0} \\[0.01ex]
        \scalebox{0.8}{0} \\[0.01ex]
        \scalebox{0.8}{0}
    \end{matrix}
    &\left[\,
    \begin{matrix}[cc|c|ccc]
         0 & 0 & 1 & 4 & 4 & 4 \\
         0 & 0 & 1 & 4 & 4 & 4 \\ \hline
         1 & 1 & 0 & 1 & 1 & 1 \\ \hline
         4 & 4 & 1 & 0 & 0 & 0 \\
         4 & 4 & 1 & 0 & 0 & 0 \\
         4 & 4 & 1 & 0 & 0 & 0
    \end{matrix} \, \right],
\end{align*}
i.e., $ \lambda_H = [4, 1, 0] $ with $ \mu_H = [12, 10, 14] $. This is consistent with $ \rank(H) = 36 - 4 - 1 - 9 = 22 $. \exampleSymbol
\end{example}

\begin{lemma} \label{lem:null space of H}
It holds that $ x^\top H \ts x = 0 $ if and only if $ H \ts x = 0 $.
\end{lemma}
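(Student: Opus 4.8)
The plan is to exploit the fact, already established in Theorem~\ref{thm:properties of H}(i), that $H$ is symmetric and positive semi-definite. The forward implication is immediate: if $H x = 0$ then $x^\top H x = x^\top 0 = 0$. All the content lies in the reverse implication, and for a positive semi-definite matrix this is a standard fact, so the proof reduces to recording the right factorization.

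The cleanest route is to use the explicit square structure supplied by Lemma~\ref{lem:quadratic form}. Setting $M := A \otimes I_n - I_n \otimes B$, we have $H = M^2$, and since $A$ and $B$ are symmetric the Kronecker factors are symmetric, hence $M$ is symmetric and $H = M^\top M$. Therefore $x^\top H x = x^\top M^\top M x = \norm{M x}_2^2$. If this vanishes then $M x = 0$, and consequently $H x = M (M x) = 0$, which closes the argument. Both directions then collapse into the single chain $H x = 0 \iff M x = 0 \iff \norm{M x}_2^2 = 0 \iff x^\top H x = 0$.

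As an alternative (and as a sanity check consistent with the rest of the section), one can argue spectrally from Theorem~\ref{thm:properties of H}(ii). Expanding $x$ in an orthonormal eigenbasis $\{v_\ell\}$ of $H$ with non-negative eigenvalues $\sigma_\ell \ge 0$, one writes $x = \sum_\ell c_\ell v_\ell$ and obtains $x^\top H x = \sum_\ell \sigma_\ell \ts c_\ell^2$. A sum of non-negative terms vanishes only when each term does, so $x^\top H x = 0$ forces $c_\ell = 0$ whenever $\sigma_\ell > 0$; thus $x$ is spanned by the zero-eigenvalue eigenvectors alone, which is precisely the condition $H x = 0$.

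Since the claim is a generic property of positive semi-definite matrices, I do not expect any genuine obstacle. The only point that needs care is to justify the semi-definiteness cleanly — either by invoking the factorization $H = M^\top M$ directly, or by citing the non-negativity of the eigenvalues $\lambda_H^{(i,j)} = \big(\lambda_A^{(i)} - \lambda_B^{(j)}\big)^2 \ge 0$ from Theorem~\ref{thm:properties of H} — after which both implications follow in essentially one line.
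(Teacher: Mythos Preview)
Your proof is correct and follows essentially the same approach as the paper: both exploit a factorization of the positive semi-definite matrix $H$ to write $x^\top H x$ as a squared norm, whose vanishing then forces $Hx = 0$. The paper uses the eigendecomposition $H = U_H \Lambda_H U_H^\top$ and the square root $\Lambda_H^{\nicefrac{1}{2}} U_H^\top$ (your alternative spectral route, phrased in matrix form), while your primary choice $M = A \otimes I_n - I_n \otimes B$ from Lemma~\ref{lem:quadratic form} is a marginally more direct instance of the same idea.
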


\begin{proof}
First, we write $ H = U_H \ts \Lambda_H U_H^\top $, where all eigenvalues of $ H $ are nonnegative as shown in Theorem~\ref{thm:properties of H}. Assume that
\begin{equation*}
    x^\top H \ts x = x^\top U_H \Lambda_H \ts U_H^\top \ts x = \big\| \Lambda_H^{\nicefrac{1}{2}} U_H^\top x \big\|_2^2 = 0,
\end{equation*}
then $ \Lambda_H^{\nicefrac{1}{2}} \ts U_H^\top x = 0 $ and
\begin{equation*}
    H \ts x = U_H \ts \Lambda_H^{\nicefrac{1}{2}} \ts \Lambda_H^{\nicefrac{1}{2}} \ts U_H^\top \ts x = 0.
\end{equation*}
The other direction is clear.
\end{proof}

We can now interpret the relaxed graph isomorphism problem \eqref{eq:GIDS} as follows: Find a stochastic matrix $ X $ so that its vectorization $ x $ is contained in the null space of $ H $. This also shows that if the graphs have repeated eigenvalues, the search space is much larger since the dimension of the null space increases as shown in Corollary~\ref{cor:rank of H}.

\begin{theorem}
Assume that $ A $ and $ B $ are isospectral and let $ S \in \R^{n \times n} $ be a block diagonal matrix of the form
\begin{equation*}
    S =
    \begin{bmatrix}
        S^{(1)} & \\
        & \ddots \\
        & & S^{(m)}
    \end{bmatrix},
\end{equation*}
where $ S^{(k)} \in \R^{\mu_A^{(k)} \times \mu_A^{(k)}} $. Then
\begin{equation*}
    X^*(S) = U_B \ts S \ts U_A^\top
\end{equation*}
minimizes the cost function $ \norm{X \ts A - B \ts X}_F^2 $ and all minimizers are of this form.
\end{theorem}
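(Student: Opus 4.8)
The plan is to mirror the proof of Theorem~\ref{thm:Procrustes} almost verbatim, exploiting the observation already made in the remark following that theorem: the earlier argument never actually used orthogonality of the diagonal blocks. First I would note that $X^*(S)$ is admissible for the unconstrained cost (any real matrix is), and that since $\norm{X A - B X}_F^2$ is a squared Frobenius norm it is bounded below by zero. Hence it suffices to exhibit that $X^*(S)$ drives the cost to exactly zero; this simultaneously certifies that it is a global minimizer.

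The key algebraic step is the commutation identity $S\,\Lambda_A = \Lambda_A\, S$. I would justify this directly from the block structure: $\Lambda_A$ acts as the scalar $\lambda_A^{(k)}$ times the identity on its $k$th diagonal block, so any $S$ whose block partition matches the multiplicity pattern $\mu_A$ commutes with it. Within the $k$th block the relation reduces to $S^{(k)}\big(\lambda_A^{(k)} I_{\mu_A^{(k)}}\big) = \big(\lambda_A^{(k)} I_{\mu_A^{(k)}}\big) S^{(k)}$, which is trivial, and the matching block sizes mean there are no off-diagonal contributions. Invoking isospectrality, $\Lambda_A = \Lambda_B$ with identical multiplicities, then upgrades this to $S\,\Lambda_A = \Lambda_B\, S$, exactly as in the orthogonal case.

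With that identity in hand I would substitute the eigendecompositions $A = U_A \Lambda_A U_A^\top$ and $B = U_B \Lambda_B U_B^\top$ and compute, using $U_A^\top U_A = I_n$ and $U_B^\top U_B = I_n$,
\[
    X^*(S)\, A - B\, X^*(S)
    = U_B\big(S\,\Lambda_A - \Lambda_B\, S\big) U_A^\top
    = 0.
\]
Therefore $\norm{X^*(S)\, A - B\, X^*(S)}_F^2 = 0$, so $X^*(S)$ minimizes the cost function.

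I do not anticipate a genuine obstacle: the argument is structurally identical to Theorem~\ref{thm:Procrustes}, and the only conceptual point worth flagging is that orthogonality of the blocks $S^{(k)}$ is no longer imposed, which is precisely what enlarges the solution set relative to the orthogonal relaxation. A worthwhile complementary observation, although not needed for the statement as given, is that the linear map $S \mapsto U_B S U_A^\top$ is injective (since $U_A, U_B$ are invertible) with image of dimension $\sum_{k=1}^m \big(\mu_A^{(k)}\big)^2$, matching $\dim\operatorname{null}(H)$ from Corollary~\ref{cor:rank of H}; in view of Lemma~\ref{lem:null space of H} this shows that $X^*(S)$ in fact parametrizes the \emph{entire} null space of $H$, so every unconstrained minimizer arises this way.
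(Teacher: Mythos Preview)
Your argument is correct and is in fact the route the paper itself flags as an alternative: after giving its main proof, the paper remarks that one ``could also reuse the proof of Theorem~\ref{thm:Procrustes} (without requiring the matrices to be orthogonal),'' which is exactly what you do.

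The paper's primary proof takes a different path. Rather than appealing to the block commutation $S\Lambda_A=\Lambda_B S$, it expands $X^*(S)=U_B S U_A^\top$ as the sum $\sum_{k}\sum_{\alpha,\beta} S^{(k)}_{\beta\alpha}\big(u_B^{(k,\beta)}\boxtimes u_A^{(k,\alpha)}\big)$ and recognizes each rank-one term as the matricization of a basis vector $u_A^{(k,\alpha)}\otimes u_B^{(k,\beta)}$ of $\operatorname{null}(H)$ from Corollary~\ref{cor:rank of H}; the vanishing of $X^*(S)A-BX^*(S)$ is then verified term by term using $A\,u_A^{(k,\alpha)}=\lambda_A^{(k)}u_A^{(k,\alpha)}$ and $B\,u_B^{(k,\beta)}=\lambda_B^{(k)}u_B^{(k,\beta)}$. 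The payoff of that longer route is that it makes the connection to the vectorized problem and to $\operatorname{null}(H)$ explicit, which feeds directly into the subsequent reformulations. Your final paragraph recovers the same conclusion by a dimension count, so nothing is lost; your version is simply more economical for the bare statement, while the paper's version is more informative for what comes next.
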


\begin{proof}
The optimal vectors $ x $ can be written as linear combinations of the vectors contained in the null space of $ H $ and it holds that $ u_A^{(k,\alpha)} \otimes u_B^{(k,\beta)} = \vec\left(u_B^{(k,\beta)} \boxtimes u_A^{(k,\alpha)} \right) $, where $ \boxtimes $ denotes the tensor product.\!\footnote{The \emph{tensor product} or \emph{outer product} of two vectors $ u, v \in \R^n $ is defined by $ u \boxtimes v = u \ts v^\top $. Note that $ u \boxtimes v \in R^{n \times n} $, whereas $ u \otimes v \in \R^{n^2} $. We have $ u \otimes v = \vec(v \boxtimes u) $.} The corresponding matrices $ X $ can thus be written as linear combinations of the matrices $ u_B^{(k,\beta)} \boxtimes u_A^{(k,\alpha)} $, where $ 1 \le \alpha, \beta \le \mu_{A}^{(k)} $ and $ k = 1, \dots, m $. We then have
\begin{equation*}
    X^*(S) = U_B \ts S \ts U_A^\top = \sum_{k=1}^m \sum_{\alpha=1}^{\mu_{A}^{(k)}} \sum_{\beta=1}^{\mu_{A}^{(k)}} S_{\beta \alpha}^{(k)} \left( u_B^{(k,\beta)} \boxtimes u_A^{(k,\alpha)} \right),
\end{equation*}
which is indeed a linear combination of the matricized vectors spanning the null space. In fact,
\begin{align*}
    X^*(S) \ts A - B \ts X^*(S)
        &= U_B \ts S \ts U_A^\top A - B \ts U_B \ts S \ts U_A^\top \\
        &= \sum_{k=1}^m \sum_{\alpha=1}^{\mu_{A}^{(k)}} \sum_{\beta=1}^{\mu_{A}^{(k)}} S_{\beta \alpha}^{(k)} \left( u_B^{(k,\beta)} \boxtimes u_A^{(k,\alpha)} \right) A - S_{\beta \alpha}^{(k)} \ts B \left( u_B^{(k,\beta)} \boxtimes u_A^{(k,\alpha)} \right) \\
        &= \sum_{k=1}^m \sum_{\alpha=1}^{\mu_{A}^{(k)}} \sum_{\beta=1}^{\mu_{A}^{(k)}} \lambda_A^{(k)} \ts S_{\beta \alpha}^{(k)} \left( u_B^{(k,\beta)} \boxtimes u_A^{(k,\alpha)} \right) - \lambda_B^{(k)} \ts S_{\beta \alpha}^{(k)} \left( u_B^{(k,\beta)} \boxtimes u_A^{(k,\alpha)} \right) \\
        &= 0
\end{align*}
and thus also the norm. Alternatively, we could reuse the proof of Theorem~\ref{thm:Procrustes} (without requiring the matrices to be orthogonal), which then also implies that all minimizers must be of the above form.
\end{proof}

The structure of the solutions is, as expected, similar to the one obtained for the orthogonal relaxation considered in Section~\ref{sec:orthogonal relaxation}. So far, we have not taken the constraint that $ X $ must be doubly stochastic into account. Let
\begin{equation*}
    w_A = \begin{bmatrix} w_A^{(1)} \\  \vdots \\ w_A^{(m)}\end{bmatrix} = U_A^\top \mathds{1}_n,
    \quad \text{and} \quad
    w_B = \begin{bmatrix} w_B^{(1)} \\  \vdots \\ w_B^{(m)}\end{bmatrix} = U_B^\top \mathds{1}_n,
\end{equation*}
where $ w_A^{(k)}, w_B^{(k)} \in \R^{\mu_A^{(k)}} $, then this requirement translates to
\begin{equation} \label{eq:pseudo-stochasticity}
\begin{alignedat}{4}
    X^*(S) \ts \mathds{1}_n &= U_B \ts S \ts U_A^\top \mathds{1}_n &&= \mathds{1}_n && \implies & S \ts w_A &= w_B, \\
    \mathds{1}_n^\top X^*(S) &= \mathds{1}_n^\top U_B \ts S \ts U_A^\top &&= \mathds{1}_n^\top && \implies \; & S^\top w_B &= w_A.
\end{alignedat}
\end{equation}
Condition \eqref{eq:pseudo-stochasticity} is closely related to the definition of friendliness, which requires that all entries of $ w_A $ and $ w_B $ are nonzero and that $ S $ is a diagonal matrix. Additionally, solutions have to be non-negative, i.e.,
\begin{equation} \label{eq:nonnegativity}
    X^*(S) \ge 0.
\end{equation}
Matrices that satisfy \eqref{eq:pseudo-stochasticity} but not \eqref{eq:nonnegativity} are sometimes called \emph{pseudo-stochastic}. Since $ S $ is a block-diagonal matrix, we can decompose \eqref{eq:pseudo-stochasticity} into $ m $ separate equations. This results in
\begin{equation} \label{eq:pseudo-stochasticity block-wise}
    \begin{split}
        \big(S^{(k)}\big)^\top S^{(k)} \ts w_A^{(k)} &= w_A^{(k)}, \\
        S^{(k)} \big(S^{(k)}\big)^\top w_B^{(k)} &= w_B^{(k)}.
    \end{split}
\end{equation}
If $ S^{(k)} $ is an orthogonal matrix, then these equations are trivially satisfied (see also Section~\ref{sec:orthogonal relaxation}), but there are in general additional solutions. This is in particular the case if $ w_A^{(k)} = w_B^{(k)} = 0 $.

\begin{lemma}
Let $ A $ and $ B $ be isospectral and friendly. Then the relaxed problem \eqref{eq:GIDS} has at most one solution with $ c_\mathcal{D}^{} = 0 $.
\end{lemma}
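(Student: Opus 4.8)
The plan is to show that friendliness forces a single candidate matrix to carry all the structure, so that uniqueness follows without any existence claim. First I would use that friendliness entails distinct eigenvalues, hence $ m = n $ and every multiplicity satisfies $ \mu_A^{(k)} = \mu_B^{(k)} = 1 $. In this regime the block-diagonal matrix $ S $ from the preceding theorem degenerates to an ordinary diagonal matrix $ S = \diag(s_1, \dots, s_n) $, and the columns of $ U_A $ and $ U_B $ are the individual unit eigenvectors $ u_A^{(k)} $ and $ u_B^{(k)} $.

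Next I would argue that every minimizer with $ c_\mathcal{D} = 0 $ has exactly this form. By Lemma~\ref{lem:quadratic form} the cost equals $ x^\top H \ts x $ with $ x = \vec(X) $, and by Lemma~\ref{lem:null space of H} a doubly stochastic $ X $ attains $ c_\mathcal{D} = 0 $ if and only if $ x \in \operatorname{null}(H) $. For distinct eigenvalues Corollary~\ref{cor:rank of H} shows that $ \operatorname{null}(H) $ is spanned by the $ n $ vectors $ u_A^{(k)} \otimes u_B^{(k)} $, so any such $ X $ can be written as $ X = U_B \ts S \ts U_A^\top $ for a uniquely associated diagonal $ S $.

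Then I would impose the row and column sum constraints. Condition \eqref{eq:pseudo-stochasticity} reads $ S \ts w_A = w_B $ and $ S^\top w_B = w_A $; since $ S $ is diagonal these decouple into the scalar identities $ s_k \ts w_A^{(k)} = w_B^{(k)} $ for $ k = 1, \dots, n $. Here the friendliness hypothesis is decisive: it guarantees $ w_A^{(k)} = \big\langle u_A^{(k)}, \mathds{1}_n \big\rangle \ne 0 $, so each equation has the unique solution $ s_k = w_B^{(k)} / w_A^{(k)} $. Thus $ S $, and with it the candidate $ X^*(S) = U_B \ts S \ts U_A^\top $, is completely determined.

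Finally I would conclude: the candidate either also satisfies the nonnegativity constraint \eqref{eq:nonnegativity}, in which case it is the unique doubly stochastic minimizer, or it does not, in which case no minimizer with $ c_\mathcal{D} = 0 $ exists; in either case there is at most one. The only point requiring care is the second step, namely that the null-space characterization is exhaustive, so that no doubly stochastic $ X $ with $ c_\mathcal{D} = 0 $ escapes the form $ U_B \ts S \ts U_A^\top $; this is precisely what Lemmas~\ref{lem:quadratic form} and~\ref{lem:null space of H} together with Corollary~\ref{cor:rank of H} supply. Note that only the row-sum half of \eqref{eq:pseudo-stochasticity} and the nonvanishing of the entries of $ w_A $ are needed for uniqueness; the column-sum equation and $ w_B $ enter only as consistency and existence conditions.
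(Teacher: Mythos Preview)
Your proposal is correct and follows essentially the same approach as the paper: friendliness forces $S$ to be diagonal with all $w_A^{(k)}\ne 0$, and the pseudo-stochasticity constraints \eqref{eq:pseudo-stochasticity} then pin down each $s_k$ uniquely. The only cosmetic difference is that the paper first combines the two equations in \eqref{eq:pseudo-stochasticity} into \eqref{eq:pseudo-stochasticity block-wise} to deduce $s_k^2=1$ and then fixes the sign, whereas you solve $s_k\,w_A^{(k)}=w_B^{(k)}$ directly; your explicit invocation of Lemmas~\ref{lem:quadratic form}, \ref{lem:null space of H} and Corollary~\ref{cor:rank of H} to justify that every zero-cost $X$ has the form $U_B\,S\,U_A^\top$ is a welcome clarification that the paper leaves implicit.
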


\begin{proof}
Since $ A $ and $ B $ are by definition friendly, all eigenvalues are distinct, which implies that $ S^{(k)} \in \R $ for all $ k $, and all $ w_A^{(k)} $ and $ w_B^{(k)} $ are non-zero. Thus, \eqref{eq:pseudo-stochasticity block-wise} can only be satisfied if $ S^{(k)} = \pm 1 $ and the signs of $ S^{(k)} $ are uniquely determined by \eqref{eq:pseudo-stochasticity}. The equations have no solution if $ \big| w_A^{(k)} \big| \ne \big| w_B^{(k)} \big| $ for at least one $ k $.
\end{proof}

If the unique optimal solution is a permutation matrix, then $ \mc[A]{G} \cong \mc[B]{G} $. A similar result was also shown in \cite{ABK15}. Geometrically, this means that the affine subspace given by the constraints $ C \ts x = d $ and the null space of $ H $ intersect in exactly one point $ x \ge 0 $.

If $ A $ and $ B $ are isospectral but not friendly, then the intersection of the affine subspace defined by $ C \ts x = d $ and the null space of $ H $ might again form an affine subspace. Let $ x_s $ be one solution of the augmented inhomogeneous system of linear equations
$
    \left[
    \begin{smallmatrix}
        C \\
        H
    \end{smallmatrix}
    \right]
    x
    =
    \left[
    \begin{smallmatrix}
        d \\
        0
    \end{smallmatrix}
    \right],
$
then all vectors of the form $ x = x_s + x_h $, where $ x_h $ is contained in the null space of the augmented matrix, are pseudo-stochastic solutions. We have
\begin{equation*}
    \rank\left(
        \begin{bmatrix}
            C \\
            H
        \end{bmatrix}
        \right) \le \min\left(n^2 - \sum_{k=1}^m \big(\mu_A^{(k)} \big)^2 + 2 \ts n - 1, n^2\right)
\end{equation*}
since the dimension of the intersection of the two (affine) subspaces must be lower than the sum of their dimensions.

\begin{example} \label{ex:square graphs}

\begin{figure}
    \centering
    \begin{minipage}{0.2\linewidth}
        \centering
        \subfiguretitle{(a)}
        \vspace*{1ex}
        \begin{tikzpicture}[
                semithick 
            ]
            \tikzstyle{every state}=[
                draw=black,
                thick,
                fill=white,
                inner sep=0pt,
                text width=6mm,
                align=center,
                scale=0.6
            ]

            \node[state] (v1) {1};
            \node[state] (v2) [right=1cm of v1] {2};
            \node[state] (v3) [below=1cm of v2] {3};
            \node[state] (v4) [below=1cm of v1] {4};

            \path[-] (v1) edge node[above=0.1ex] {\footnotesize 1} (v2);
            \path[-] (v2) edge node[right=0.2ex] {\footnotesize 2} (v3);
            \path[-] (v3) edge node[below=0.1ex] {\footnotesize 3} (v4);
            \path[-] (v4) edge node[left=0.2ex] {\footnotesize 4} (v1);
        \end{tikzpicture} \\[1ex]
        \begin{tikzpicture}[
                semithick 
            ]
            \tikzstyle{every state}=[
                draw=black,
                thick,
                fill=white,
                inner sep=0pt,
                text width=6mm,
                align=center,
                scale=0.6
            ]

            \node[state] (v1) {2};
            \node[state] (v2) [right=1cm of v1] {4};
            \node[state] (v3) [below=1cm of v2] {1};
            \node[state] (v4) [below=1cm of v1] {3};

            \path[-] (v1) edge node[above=0.1ex] {\footnotesize 1} (v2);
            \path[-] (v2) edge node[right=0.2ex] {\footnotesize 2} (v3);
            \path[-] (v3) edge node[below=0.1ex] {\footnotesize 3} (v4);
            \path[-] (v4) edge node[left=0.2ex] {\footnotesize 4} (v1);
        \end{tikzpicture}
    \end{minipage}
    \begin{minipage}{0.2\linewidth}
        \centering
        \subfiguretitle{(b)}
        \vspace*{1ex}
        \begin{tikzpicture}[
                semithick 
            ]
            \tikzstyle{every state}=[
                draw=black,
                thick,
                fill=white,
                inner sep=0pt,
                text width=6mm,
                align=center,
                scale=0.6
            ]

            \node[state] (v1) {1};
            \node[state] (v2) [right=1cm of v1] {2};
            \node[state] (v3) [below=1cm of v2] {3};
            \node[state] (v4) [below=1cm of v1] {4};

            \path[-] (v1) edge node[above=0.1ex] {\footnotesize 1} (v2);
            \path[-] (v2) edge node[right=0.2ex] {\footnotesize 2} (v3);
            \path[-] (v3) edge node[below=0.1ex] {\footnotesize 2} (v4);
            \path[-] (v4) edge node[left=0.2ex] {\footnotesize 2} (v1);

            \draw[red, dotted, line width=0.75pt] (0.8, -0.02) -- (0.8, -1.6);
        \end{tikzpicture} \\[1ex]
        \begin{tikzpicture}[
                semithick 
            ]
            \tikzstyle{every state}=[
                draw=black,
                thick,
                fill=white,
                inner sep=0pt,
                text width=6mm,
                align=center,
                scale=0.6
            ]

            \node[state] (v1) {2};
            \node[state] (v2) [right=1cm of v1] {4};
            \node[state] (v3) [below=1cm of v2] {1};
            \node[state] (v4) [below=1cm of v1] {3};

            \path[-] (v1) edge node[above=0.1ex] {\footnotesize 1} (v2);
            \path[-] (v2) edge node[right=0.2ex] {\footnotesize 2} (v3);
            \path[-] (v3) edge node[below=0.1ex] {\footnotesize 2} (v4);
            \path[-] (v4) edge node[left=0.2ex] {\footnotesize 2} (v1);

            \draw[red, dotted, line width=0.75pt] (0.8, -0.02) -- (0.8, -1.6);
        \end{tikzpicture}
    \end{minipage}
    \begin{minipage}{0.2\linewidth}
        \centering
        \subfiguretitle{(c)}
        \vspace*{1ex}
        \begin{tikzpicture}[
                semithick 
            ]
            \tikzstyle{every state}=[
                draw=black,
                thick,
                fill=white,
                inner sep=0pt,
                text width=6mm,
                align=center,
                scale=0.6
            ]

            \node[state] (v1) {1};
            \node[state] (v2) [right=1cm of v1] {2};
            \node[state] (v3) [below=1cm of v2] {3};
            \node[state] (v4) [below=1cm of v1] {4};

            \path[-] (v1) edge node[above=0.1ex] {\footnotesize 1} (v2);
            \path[-] (v2) edge node[right=0.2ex] {\footnotesize 2} (v3);
            \path[-] (v3) edge node[below=0.1ex] {\footnotesize 1} (v4);
            \path[-] (v4) edge node[left=0.2ex] {\footnotesize 2} (v1);

            \draw[red, dotted, line width=0.75pt] (0.8, -0.02) -- (0.8, -1.6);
            \draw[red, dotted, line width=0.75pt] (0.01, -0.8) -- (1.63, -0.8);
        \end{tikzpicture} \\[1ex]
        \begin{tikzpicture}[
                semithick 
            ]
            \tikzstyle{every state}=[
                draw=black,
                thick,
                fill=white,
                inner sep=0pt,
                text width=6mm,
                align=center,
                scale=0.6
            ]

            \node[state] (v1) {2};
            \node[state] (v2) [right=1cm of v1] {4};
            \node[state] (v3) [below=1cm of v2] {1};
            \node[state] (v4) [below=1cm of v1] {3};

            \path[-] (v1) edge node[above=0.1ex] {\footnotesize 1} (v2);
            \path[-] (v2) edge node[right=0.2ex] {\footnotesize 2} (v3);
            \path[-] (v3) edge node[below=0.1ex] {\footnotesize 1} (v4);
            \path[-] (v4) edge node[left=0.2ex] {\footnotesize 2} (v1);

            \draw[red, dotted, line width=0.75pt] (0.8, -0.02) -- (0.8, -1.6);
            \draw[red, dotted, line width=0.75pt] (0.01, -0.8) -- (1.63, -0.8);
        \end{tikzpicture}
    \end{minipage}
    \begin{minipage}{0.2\linewidth}
        \centering
        \subfiguretitle{(d)}
        \vspace*{1ex}
        \begin{tikzpicture}[
                semithick 
            ]
            \tikzstyle{every state}=[
                draw=black,
                thick,
                fill=white,
                inner sep=0pt,
                text width=6mm,
                align=center,
                scale=0.6
            ]

            \node[state] (v1) {1};
            \node[state] (v2) [right=1cm of v1] {2};
            \node[state] (v3) [below=1cm of v2] {3};
            \node[state] (v4) [below=1cm of v1] {4};

            \path[-] (v1) edge node[above=0.1ex] {\footnotesize 1} (v2);
            \path[-] (v2) edge node[right=0.2ex] {\footnotesize 1} (v3);
            \path[-] (v3) edge node[below=0.1ex] {\footnotesize 1} (v4);
            \path[-] (v4) edge node[left=0.2ex] {\footnotesize 1} (v1);

            \draw[red, dotted, line width=0.75pt] (0.8, -0.02) -- (0.8, -1.6);
            \draw[red, dotted, line width=0.75pt] (0.01, -0.8) -- (1.63, -0.8);
            \draw[red, dotted, line width=0.75pt] (0.2, -0.25) -- (1.41, -1.37);
            \draw[red, dotted, line width=0.75pt] (0.2, -1.37) -- (1.41, -0.25);
        \end{tikzpicture} \\[1ex]
        \begin{tikzpicture}[
                semithick 
            ]
            \tikzstyle{every state}=[
                draw=black,
                thick,
                fill=white,
                inner sep=0pt,
                text width=6mm,
                align=center,
                scale=0.6
            ]

            \node[state] (v1) {2};
            \node[state] (v2) [right=1cm of v1] {4};
            \node[state] (v3) [below=1cm of v2] {1};
            \node[state] (v4) [below=1cm of v1] {3};

            \path[-] (v1) edge node[above=0.1ex] {\footnotesize 1} (v2);
            \path[-] (v2) edge node[right=0.2ex] {\footnotesize 1} (v3);
            \path[-] (v3) edge node[below=0.1ex] {\footnotesize 1} (v4);
            \path[-] (v4) edge node[left=0.2ex] {\footnotesize 1} (v1);

            \draw[red, dotted, line width=0.75pt] (0.8, -0.02) -- (0.8, -1.6);
            \draw[red, dotted, line width=0.75pt] (0.01, -0.8) -- (1.63, -0.8);
            \draw[red, dotted, line width=0.75pt] (0.2, -0.25) -- (1.41, -1.37);
            \draw[red, dotted, line width=0.75pt] (0.2, -1.37) -- (1.41, -0.25);
        \end{tikzpicture}
    \end{minipage}
    \caption{Isomorphic square graphs $ \mc[A]{G} $ (top) and $ \mc[B]{G} $ (bottom) with different edge weights and symmetries. The dotted red lines represent reflection symmetries.}
    \label{fig:square graphs}
\end{figure}
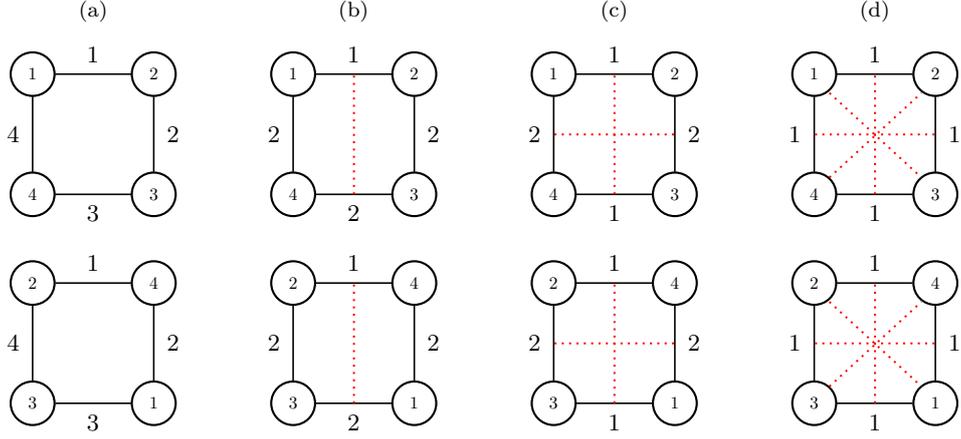

Let us consider the square graphs shown in Figure~\ref{fig:square graphs}. Depending on the edge weights, we obtain different types of symmetries, which are also reflected in the spectral properties of the graphs.
\begin{enumerate}[leftmargin=3.5ex, itemsep=0ex, topsep=0.5ex, label=(\alph*)]
\item There are no symmetries and the automorphism group is trivial. All eigenvalues are distinct and all eigenvectors are friendly. We can easily determine the signs of the diagonal entries of~$ S $. Assume, for instance, that $ w_A = [0.1, 0.3, -0.5, -1.9]^\top $ and $ w_B = [-0.1, 0.3, 0.5, -1.9]^\top $, then $ S = \diag(-1, 1, -1, 1) $ using \eqref{eq:pseudo-stochasticity block-wise}. The null space of the augmented matrix contains only the zero vector and the solution is unique.
\item There are two automorphisms due to the reflection symmetry. The eigenvalues are distinct, but now only the first and third eigenvectors are friendly. Assume that $ w_A = [2, 0, 0.2, 0]^\top $ and $ w_B = [-2, 0, 0.2, 0]^\top $, then we have $ S = \diag(-1, \pm 1, 1, \pm 1) $. However, only two combinations result in permutation matrices, the other two satisfy \eqref{eq:pseudo-stochasticity} but contain negative entries.
\item There now exist four automorphisms, but the eigenvalues are still distinct. Only the first eigenvector is friendly. Four of the eight possible combinations for $ S $ result in permutations.
\item Due to the $ D_4 $ symmetry, there exist eight automorphisms. The eigenvalue $ \lambda_A^{(2)} = 0 $ has multiplicity $ \mu_A^{(2)} = 2 $. Now we cannot simply determine the sign structure of the matrix $ S $ anymore since the block $ S^{(2)} $ is a $ 2 \times 2 $ matrix. \exampleSymbol
\end{enumerate}
\end{example}

This illustrates again how symmetries and repeated eigenvalues complicate graph isomorphism testing since the solution space increases.

\begin{example}
We have seen in Example~\ref{ex:Frucht graph feasbible solutions} that there exists an 11-dimensional affine subspace of solutions due to the 11 unfriendly eigenvectors. However, only one of the eigenvectors of the Frucht graphs is ambiguous so that we can still determine all but one of the signs of the diagonal matrix~$ S $. In the remaining one-dimensional space of feasible solutions, there exists just one non-negative matrix, which is the permutation matrix that solves the graph isomorphism problem. \exampleSymbol
\end{example}

By exploiting additional properties of the eigenvectors, we can in this case eliminate spurious solutions and obtain again a unique solution.

\begin{example} \label{ex:convex combinations}
If we solve the relaxed optimization problem defined in Lemma~\ref{lem:quadratic form} for the square graphs shown in Figure~\ref{fig:square graphs}\ts(b), then we obtain, for example,
\begin{equation*}
    X^*(S) =
    \begin{bmatrix}
        0 & 0 & \frac{1}{2} & \frac{1}{2} \\
        \frac{1}{2} & \frac{1}{2} & 0 & 0 \\
        0 & 0 & \frac{1}{2} & \frac{1}{2} \\
        \frac{1}{2} & \frac{1}{2} & 0 & 0
    \end{bmatrix}
    = \tfrac{1}{2}
    \underbrace{
    \begin{bmatrix}
         0 & 0 & 1 & 0 \\
         1 & 0 & 0 & 0 \\
         0 & 0 & 0 & 1 \\
         0 & 1 & 0 & 0
     \end{bmatrix}
     }_{=: P_1}
     + \tfrac{1}{2}
    \underbrace{
    \begin{bmatrix}
         0 & 0 & 0 & 1 \\
         0 & 1 & 0 & 0 \\
         0 & 0 & 1 & 0 \\
         1 & 0 & 0 & 0
     \end{bmatrix}
     }_{=: P_2},
\end{equation*}
which is indeed a convex combination of the optimal permutation matrices $ P_1 $ and $ P_2 $, see Lemma~\ref{lem:convex combinations}. Any other convex combination of the two permutation matrices is also an optimal solution of the relaxed problem. The matrix $ S $ is then in general not an orthogonal matrix, which it must be according to Section~\ref{sec:orthogonal relaxation}. \exampleSymbol
\end{example}

\subsection{Penalizing non-binary matrices}

The question now is how we can penalize doubly stochastic matrices that are not permutation matrices. We use Lemma~\ref{lem:null space of H} to turn the cost function into a constraint and optimize a different function instead.

\begin{lemma} \label{lem:penalized problem}
The optimization problem \eqref{eq:GIDS} penalizing non-binary matrices can be written as
\begin{equation*}
    \min_{\substack{x \ge 0 \\[0.4ex] C x = d \\[0.4ex] H x = 0}} -x^\top x,
\end{equation*}
where the first two constraints enforce the stochasticity of $ X $ and the third constraint guarantees that $ X \ts A = B \ts X $.
\end{lemma}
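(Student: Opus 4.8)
The plan is to verify the three assertions implicit in the statement in turn: that the constraints $x \ge 0$ and $C\ts x = d$ enforce double stochasticity, that the constraint $H\ts x = 0$ is equivalent to $X\ts A = B\ts X$, and that replacing the original objective by $-x^\top x$ penalizes precisely the non-binary doubly stochastic matrices. Since the first point was already established when $C$ and $d$ were introduced, I would simply recall that $C\ts x = d$ encodes exactly the row- and column-sum conditions $X \mathds{1}_n = \mathds{1}_n$ and $\mathds{1}_n^\top X = \mathds{1}_n^\top$, so that together with $x \ge 0$ these two constraints are equivalent to $X = \vec^{-1}(x) \in \mathcal{D}(n)$.

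Next I would turn the cost function into a hard constraint. By Lemma~\ref{lem:quadratic form} we have $x^\top H x = \norm{X\ts A - B\ts X}_F^2$, and by Lemma~\ref{lem:null space of H} the condition $x^\top H x = 0$ is equivalent to $H\ts x = 0$. Chaining these gives $H\ts x = 0 \iff \norm{X\ts A - B\ts X}_F^2 = 0 \iff X\ts A = B\ts X$, which is the third claimed constraint. Hence the feasible set of the reformulated problem is exactly the set of doubly stochastic matrices that attain $c_\mathcal{D}^{} = 0$ in \eqref{eq:GIDS}.

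Finally I would show that minimizing $-x^\top x$ over this feasible set singles out permutation matrices. Writing $x^\top x = \norm{X}_F^2 = \sum_{i,j} x_{ij}^2$ and using that every entry of a doubly stochastic matrix lies in $[0,1]$, so that $x_{ij}^2 \le x_{ij}$, I obtain
\begin{equation*}
    x^\top x = \sum_{i,j} x_{ij}^2 \le \sum_{i,j} x_{ij} = n,
\end{equation*}
with equality if and only if every $x_{ij} \in \{0,1\}$, i.e. $X \in \mathcal{P}(n)$. Thus $-x^\top x$ attains its smallest value exactly on the binary matrices and is strictly larger on every genuinely fractional doubly stochastic solution, which is precisely the penalization of non-binary matrices claimed in the statement.

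The main subtlety is conceptual rather than computational: this reformulation replaces the convex objective of Lemma~\ref{lem:quadratic form} by the concave objective $-x^\top x$, so the problem is no longer convex and its minimum is attained at a vertex of the polytope formed by intersecting the Birkhoff polytope with the affine subspace $H\ts x = 0$. The point worth emphasizing is that whenever $\mc[A]{G} \cong \mc[B]{G}$ at least one element of $\Iso(\mc[A]{G}, \mc[B]{G})$ lies in this polytope, so the bound $n$ is attained and every minimizer of $-x^\top x$ is in fact a permutation matrix solving the graph isomorphism problem.
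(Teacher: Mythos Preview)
Your proof is correct and follows essentially the same route as the paper: you invoke Lemma~\ref{lem:quadratic form} and Lemma~\ref{lem:null space of H} to turn the vanishing cost into the linear constraint $H x = 0$, and you establish the upper bound $x^\top x \le n$ via $x_{ij}^2 \le x_{ij}$ with equality exactly on $\mathcal{P}(n)$. The only difference is that the paper also records the lower bound $x^\top x \ge 1$ (via Cauchy--Schwarz, with equality at $X = \tfrac{1}{n}\mathds{1}_n\mathds{1}_n^\top$), but this is auxiliary information not required for the lemma itself.
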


\begin{proof}
This is a reformulation of the optimization problem from Lemma~\ref{lem:quadratic form}. Due to Lemma~\ref{lem:null space of H}, we know that $ H \ts x = 0 $ implies $ x^\top H \ts x = 0 $, which in turn implies that $ \norm{X \ts A - B \ts X} = 0 $. For any doubly stochastic matrix $ X $, we have
\begin{equation*}
    x^\top x = \tr\big(X^\top X\big) = \sum_{i=1}^n \sum_{j=1}^n X_{ij}^2 \le \sum_{i=1}^n \sum_{j=1}^n X_{ij} = n
\end{equation*}
and equality holds if and only if $ X \in \mathcal{P}(n) $. Furthermore,
\begin{equation*}
    x^\top x = \tr\big(X^\top X\big) = \sum_{i=1}^n \norm{X_i}_2^2 \ge n \frac{1}{n} = 1
\end{equation*}
using the Cauchy--Schwarz inequality\footnote{Note that $ 1 = \norm{X_i}_1 = \innerprod{\mathds{1}_n}{X_i} \le \norm{\mathds{1}_n}_2 \norm{X_i}_2 = \sqrt{n} \norm{X_i}_2 $.} for the columns of the matrix $ X $ and equality holds if and only if $ X = \frac{1}{n} \mathds{1} \mathds{1}^\top $. That is, $ 1 \le x^\top x \le n $ and minimizing $ -x^\top x $ will force the solver to favor permutation matrices.
\end{proof}

A similar penalty term was used in the convex-to-concave projection approach in \cite{Dym18}. Since $ u_A^{(1)} = \mathds{1}_n $ is an eigenvector of any regular graph $ \mc[A]{G} $, the matrix $ X = \frac{1}{n} \mathds{1} \mathds{1}^\top $ is of the form $ X^*(S) $, where only the entry $ S_{11} $ is nonzero. This solution does not contain any useful information about potential isomorphisms.

\begin{remark}
As $ X^*(S) = U_B \ts S \ts U_A^\top $, this yields
\begin{align*}
    \tr\Big(X^*(S)^\top X^*(S)\Big)
        = \tr\Big( U_A \ts S^\top U_B^\top U_B \ts S \ts U_A^\top \Big)
        = \tr\Big( S^\top S \Big)
\end{align*}
using the cyclic property of the trace and the orthogonality of the matrices $ U_A $ and $ U_B $. The trace is $ n $ if and only if $ S^\top S = I_n $, i.e., if $ S $ is an orthogonal matrix. This is consistent with the derivations in Section~\ref{sec:orthogonal relaxation}. We could now also work directly in the space of $ S $ matrices by rewriting the optimization problem defined in Lemma~\ref{lem:penalized problem} as
\begin{equation*}
    \min_{\substack{N \sigma \ge 0 \\[0.4ex] C N \sigma = d}} - \sigma^\top \sigma,
\end{equation*}
with $ N = \big[U_A^{(1)} \otimes U_B^{(1)}, \dots, U_A^{(m)} \otimes U_B^{(m)}\big] $ and $ x = N \ts \sigma $. That is, the columns of $ N $ form a basis of the null space of $ H $, constructed from the normalized eigenvectors, and $ \sigma $ is a reduced representation of the block matrix $ S $, containing only the block diagonal entries.
\end{remark}

\subsection{Frank--Wolfe algorithm}

In order to solve the optimization problem defined in Lemma~\ref{lem:penalized problem}, we use the Frank--Wolfe algorithm \cite{FW56, Jaggi13}, which for a constrained convex optimization problem of the form
\begin{equation*}
    \min_{x \in \mathcal{C}} f(x)
\end{equation*}
defined on a convex set $ \mathcal{C} $ consists of the following two steps:
\begin{enumerate}[leftmargin=3.5ex, itemsep=0ex, topsep=0.5ex]
\item Given a feasible initial condition $ x^{(k)} \in \mathcal{C} $, compute an optimal solution $ y^* $ of the linear problem
\begin{equation*}
    \min_{y \in \mathcal{C}} \big\langle y, \nabla f(x^{(k)}) \big\rangle.
\end{equation*}
\item Define $ x^{(k+1)} $ to be a convex combination of $ x^{(k)} $ and $ y^* $, i.e.,
\begin{equation*}
    x^{(k+1)} = (1 - \gamma) \ts x^{(k)} + \gamma \ts y^*,
\end{equation*}
where $ 0 \le \gamma \le 1 $ is chosen so that $ f\big(x^{(k+1)}\big) $ is minimized.
\end{enumerate}
The cost function $ f(x) = -x^\top x $ we seek to minimize is not convex but strictly concave and the global minima are permutation matrices. Since we are just interested in showing that at least one permutation matrix exists that solves the graph isomorphism problem, the non-uniqueness of the optimal global solution is not a problem. The gradient of the cost function is simply $ \nabla f(x) = -2 \ts x $ and, ignoring the constant factor 2, in each iteration of the Frank--Wolfe algorithm, we solve
\begin{equation*}
    \min_{\substack{y \ge 0 \\[0.4ex] C y = d \\[0.4ex] H y = 0}} -\big\langle y, x^{(k)} \big\rangle
\end{equation*}
to obtain a new solution $ x^{(k+1)} $ as described above, where $ x^{(0)} $ is the solution of the relaxed problem defined in Lemma~\ref{lem:quadratic form} (if no solutions exist, then $ \mc[A]{G} \not\cong \mc[B]{G} $). Since $ x^{(k)} $ and $ y^* $ are both feasible solutions, the convex combination $ x^{(k+1)} $ also satisfies the constraints. That is, the Frank--Wolfe algorithm is guaranteed to compute a doubly stochastic matrix whose vectorization is contained in the null space of $ H $.

\begin{example} \label{ex:Frank-Wolfe convergence}

\begin{figure}
    \centering
    \begin{minipage}[t]{0.35\linewidth}
        \centering
        \subfiguretitle{(a)}
        \vspace*{2.5ex}
        \includegraphics[width=0.95\linewidth]{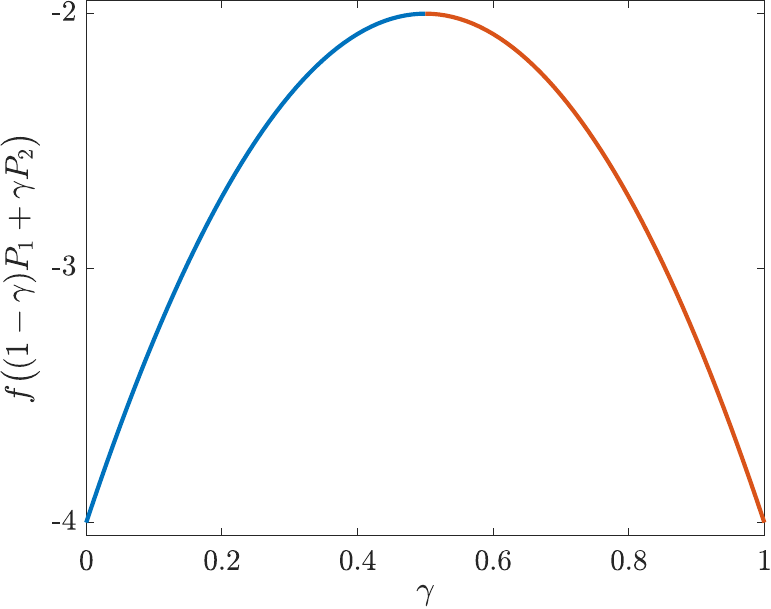}
    \end{minipage}
    \begin{minipage}[t]{0.30\linewidth}
        \centering
        \subfiguretitle{(b)}
        \begin{tikzpicture}
            \node[anchor=south west,inner sep=0] (image) at (0,0) {\includegraphics[width=0.85\linewidth]{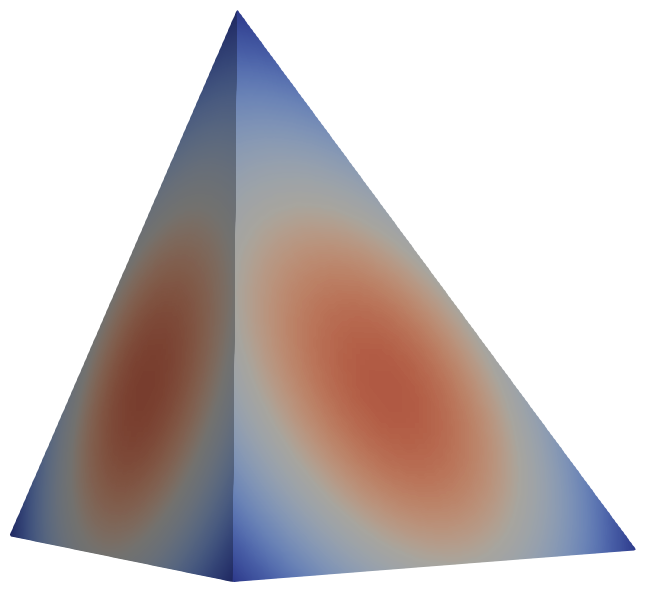}};
            \begin{scope}[x={(image.south east)},y={(image.north west)}]
                \draw (-0.05,0.05) node{$ P_3 $}
                      (0.35,-0.07) node{$ P_4 $}
                      (1.05,0.05) node{$ P_1 $}
                      (0.38,1.05) node{$ P_2 $};
            \end{scope}
        \end{tikzpicture}
    \end{minipage}
    \hspace{0.5ex}
    \begin{minipage}[t]{0.3\linewidth}
        \centering
        \subfiguretitle{(c)}
        \begin{tikzpicture}
            \node[anchor=south west,inner sep=0] (image) at (0,0) {\includegraphics[width=0.85\linewidth]{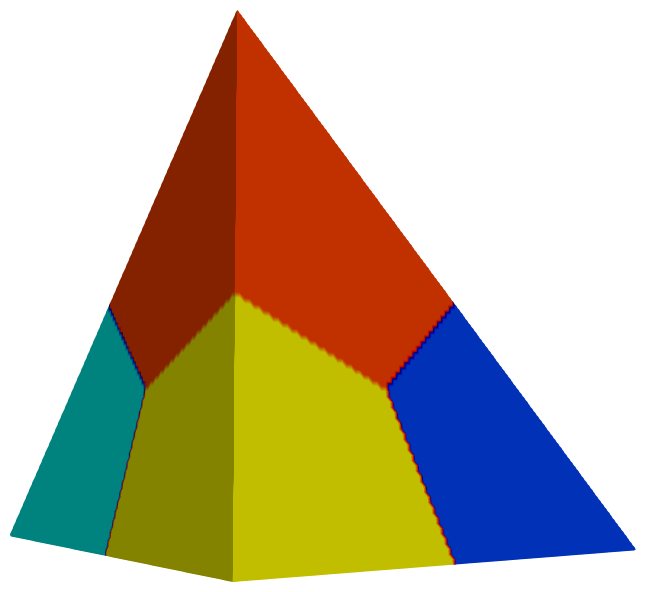}};
            \begin{scope}[x={(image.south east)},y={(image.north west)}]
                \draw (-0.05,0.05) node{$ P_3 $}
                      (0.35,-0.07) node{$ P_4 $}
                      (1.05,0.05) node{$ P_1 $}
                      (0.38,1.05) node{$ P_2 $};
            \end{scope}
        \end{tikzpicture}
    \end{minipage}
    \caption{(a) Cost function for the graphs in Figure~\ref{fig:square graphs}\ts(b) for different initial conditions. The two minima are the permutation matrices $ P_1 $ and $ P_2 $. The solver always converges to $ P_1 $ (blue line) or $ P_2 $ (red line) after just one iteration. (b) Cost function for the graphs in Figure~\ref{fig:square graphs}\ts(c) for different convex combinations of the four optimal permutation matrices. The tetrahedron represents the parametrization of the convex set of solutions. The vertices are the permutation matrices and the minima of the cost function. All values are between $ -4 $ (blue) and $ -1 $ (red). (c) Convergence of the Frank--Wolfe algorithm to one of the four optimal solutions.}
    \label{fig:Frank-Wolfe convergence}
\end{figure}

For illustration purposes, we consider again the simple square graphs shown in Figure~\ref{fig:square graphs}\ts(b). In Example~\ref{ex:convex combinations}, we have seen that any convex combination of the optimal permutation matrices is also an optimal solution of the relaxed problem. Depending on the convex combination, the Frank--Wolfe algorithm converges to different solutions as shown in Figure~\ref{fig:Frank-Wolfe convergence}\ts(a). Similarly, for the graphs shown in Figure~\ref{fig:square graphs}\ts(c), any convex combination of the four optimal permutation matrices is an optimal solution of the relaxed problem. Depending on the initial condition, the Frank--Wolfe algorithm again converges after just one iteration to one of the four permutation matrices. This is shown in Figures~\ref{fig:Frank-Wolfe convergence}\ts(b)--(c). \exampleSymbol
\end{example}

The results illustrate that the new cost function penalizes doubly stochastic matrices that are not permutation matrices and forces the Frank--Wolfe solver into the vertices of the convex set.

\begin{example} \label{ex:Petersen graph}
Let us analyze the two graphs shown in Figure~\ref{fig:Petersen graph}. We have $ \lambda_A = [3, 1, -2] = \lambda_B $ with multiplicities $ \mu_A = [1, 5, 4] = \mu_B $ and $ \rank(H) = 58 $ by Corollary~\ref{cor:rank of H}. The solution of the optimization problem defined in Lemma~\ref{lem:quadratic form} is simply $ X^{(0)} = \frac{1}{n} \mathds{1}_n \mathds{1}_n^\top $, i.e., all entries are the same and we cannot extract any information about potential isomorphisms yet. However, after just one iteration of the Frank--Wolfe algorithm, we obtain the permutation
\begin{equation*}
    \pi =
    \begin{bmatrix}
        1 &  2 & 3 & 4 & 5 & 6 & 7 & 8 & 9 & 10 \\
        5 & 10 & 8 & 3 & 7 & 9 & 4 & 6 & 2 & 1
    \end{bmatrix},
\end{equation*}
which is one of the 120 valid solutions and proves that the graphs are isomorphic. \exampleSymbol
\end{example}

Even for highly symmetric graphs the Frank--Wolfe algorithm is able to find one of the global minima of the cost function. In general, the solver will not converge after just one iteration. For the Frucht graph, for instance, see Example~\ref{ex:Frucht graph feasbible solutions}, the algorithm converges after two iterations, although the solution is in this case uniquely defined.

\section{Numerical results}

In this section, we will present numerical results for two challenging highly symmetric graphs.

\subsection{Paley graph}

\begin{figure}
    \centering
    \begin{minipage}[t]{0.25\linewidth}
        \centering
        \subfiguretitle{(b) $ f\big(X^{(0)}\big) = -1 $}
        \vspace*{0.5ex}
        \includegraphics[width=0.9\linewidth]{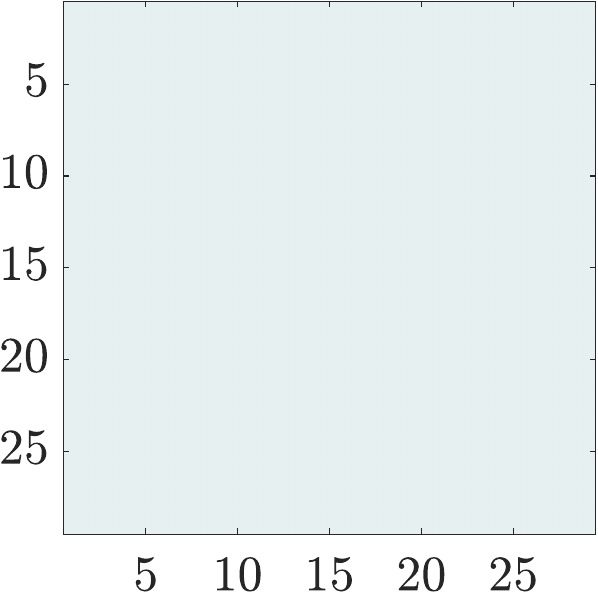} \\[1ex]
        \subfiguretitle{(c) $ f\big(X^{(1)}\big) = -3.17 $}
        \vspace*{0.5ex}
        \includegraphics[width=0.9\linewidth]{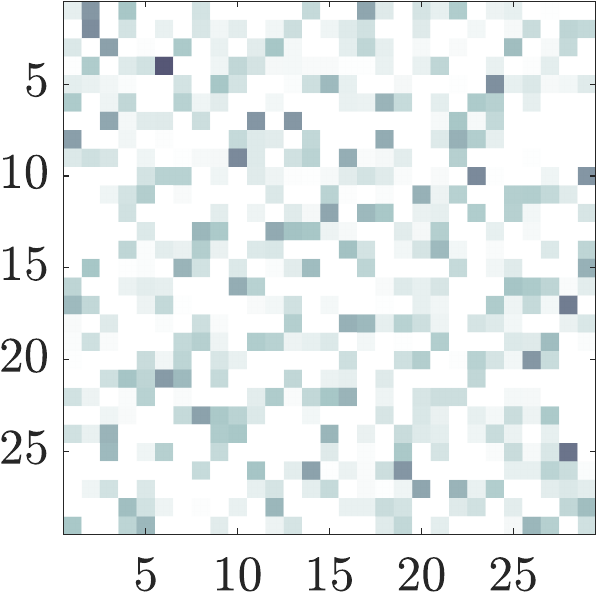}
    \end{minipage}
    \begin{minipage}[t]{0.45\linewidth}
        \centering
        \subfiguretitle{(a)}
        \vspace*{2ex}
        \resizebox{1\textwidth}{!}{%
        \begin{tikzpicture}[
            dot/.style={draw,circle,minimum size=3mm,inner sep=0pt,outer sep=0pt,fill=red}
            ]

            \foreach \i in {1,...,29}
                \coordinate [dot] (\i) at ({10*sin(360*\i/29)}, {10*cos(360*\i/29)});
            \path[-,line width=0.05mm] (1) edge (2)
                     (1) edge (4)
                     (1) edge (6)
                     (1) edge (8)
                     (1) edge (10)
                     (1) edge (12)
                     (1) edge (14)
                     (1) edge (16)
                     (1) edge (18)
                     (1) edge (20)
                     (1) edge (22)
                     (1) edge (24)
                     (1) edge (26)
                     (1) edge (28)
                     (2) edge (3)
                     (2) edge (5)
                     (2) edge (8)
                     (2) edge (10)
                     (2) edge (14)
                     (2) edge (15)
                     (2) edge (18)
                     (2) edge (19)
                     (2) edge (21)
                     (2) edge (22)
                     (2) edge (23)
                     (2) edge (24)
                     (2) edge (25)
                     (3) edge (5)
                     (3) edge (7)
                     (3) edge (8)
                     (3) edge (10)
                     (3) edge (12)
                     (3) edge (13)
                     (3) edge (14)
                     (3) edge (17)
                     (3) edge (18)
                     (3) edge (21)
                     (3) edge (27)
                     (3) edge (28)
                     (3) edge (29)
                     (4) edge (5)
                     (4) edge (7)
                     (4) edge (10)
                     (4) edge (12)
                     (4) edge (16)
                     (4) edge (17)
                     (4) edge (20)
                     (4) edge (21)
                     (4) edge (23)
                     (4) edge (24)
                     (4) edge (25)
                     (4) edge (26)
                     (4) edge (27)
                     (5) edge (7)
                     (5) edge (9)
                     (5) edge (10)
                     (5) edge (12)
                     (5) edge (14)
                     (5) edge (15)
                     (5) edge (16)
                     (5) edge (19)
                     (5) edge (20)
                     (5) edge (23)
                     (5) edge (29)
                     (6) edge (7)
                     (6) edge (9)
                     (6) edge (12)
                     (6) edge (14)
                     (6) edge (18)
                     (6) edge (19)
                     (6) edge (22)
                     (6) edge (23)
                     (6) edge (25)
                     (6) edge (26)
                     (6) edge (27)
                     (6) edge (28)
                     (6) edge (29)
                     (7) edge (9)
                     (7) edge (11)
                     (7) edge (12)
                     (7) edge (14)
                     (7) edge (16)
                     (7) edge (17)
                     (7) edge (18)
                     (7) edge (21)
                     (7) edge (22)
                     (7) edge (25)
                     (8) edge (9)
                     (8) edge (11)
                     (8) edge (14)
                     (8) edge (16)
                     (8) edge (20)
                     (8) edge (21)
                     (8) edge (24)
                     (8) edge (25)
                     (8) edge (27)
                     (8) edge (28)
                     (8) edge (29)
                     (9) edge (11)
                     (9) edge (13)
                     (9) edge (14)
                     (9) edge (16)
                     (9) edge (18)
                     (9) edge (19)
                     (9) edge (20)
                     (9) edge (23)
                     (9) edge (24)
                     (9) edge (27)
                     (10) edge (11)
                     (10) edge (13)
                     (10) edge (16)
                     (10) edge (18)
                     (10) edge (22)
                     (10) edge (23)
                     (10) edge (26)
                     (10) edge (27)
                     (10) edge (29)
                     (11) edge (13)
                     (11) edge (15)
                     (11) edge (16)
                     (11) edge (18)
                     (11) edge (20)
                     (11) edge (21)
                     (11) edge (22)
                     (11) edge (25)
                     (11) edge (26)
                     (11) edge (29)
                     (12) edge (13)
                     (12) edge (15)
                     (12) edge (18)
                     (12) edge (20)
                     (12) edge (24)
                     (12) edge (25)
                     (12) edge (28)
                     (12) edge (29)
                     (13) edge (15)
                     (13) edge (17)
                     (13) edge (18)
                     (13) edge (20)
                     (13) edge (22)
                     (13) edge (23)
                     (13) edge (24)
                     (13) edge (27)
                     (13) edge (28)
                     (14) edge (15)
                     (14) edge (17)
                     (14) edge (20)
                     (14) edge (22)
                     (14) edge (26)
                     (14) edge (27)
                     (15) edge (17)
                     (15) edge (19)
                     (15) edge (20)
                     (15) edge (22)
                     (15) edge (24)
                     (15) edge (25)
                     (15) edge (26)
                     (15) edge (29)
                     (16) edge (17)
                     (16) edge (19)
                     (16) edge (22)
                     (16) edge (24)
                     (16) edge (28)
                     (16) edge (29)
                     (17) edge (19)
                     (17) edge (21)
                     (17) edge (22)
                     (17) edge (24)
                     (17) edge (26)
                     (17) edge (27)
                     (17) edge (28)
                     (18) edge (19)
                     (18) edge (21)
                     (18) edge (24)
                     (18) edge (26)
                     (19) edge (21)
                     (19) edge (23)
                     (19) edge (24)
                     (19) edge (26)
                     (19) edge (28)
                     (19) edge (29)
                     (20) edge (21)
                     (20) edge (23)
                     (20) edge (26)
                     (20) edge (28)
                     (21) edge (23)
                     (21) edge (25)
                     (21) edge (26)
                     (21) edge (28)
                     (22) edge (23)
                     (22) edge (25)
                     (22) edge (28)
                     (23) edge (25)
                     (23) edge (27)
                     (23) edge (28)
                     (24) edge (25)
                     (24) edge (27)
                     (25) edge (27)
                     (25) edge (29)
                     (26) edge (27)
                     (26) edge (29)
                     (27) edge (29)
                     (28) edge (29);
        \end{tikzpicture}}
    \end{minipage}
    \begin{minipage}[t]{0.25\linewidth}
        \centering
        \subfiguretitle{(d) $ f\big(X^{(2)}\big) = -3.61 $}
        \vspace*{0.5ex}
        \includegraphics[width=0.9\linewidth]{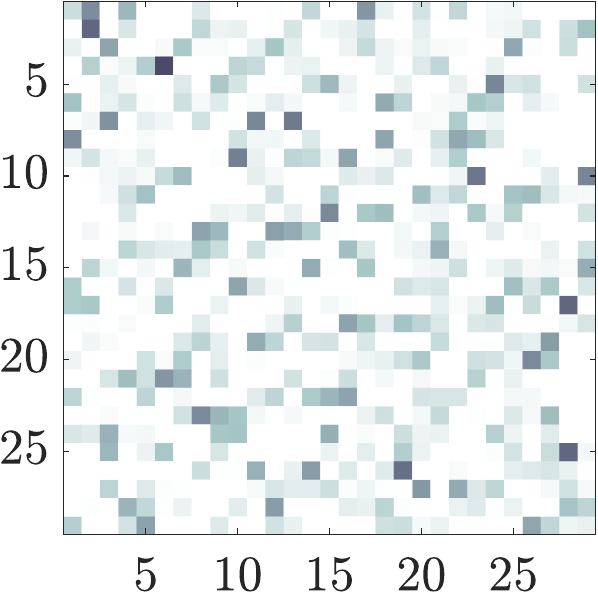} \\[1ex]
        \subfiguretitle{(e) $ f\big(X^{(3)}\big) = -29 $}
        \vspace*{0.5ex}
        \includegraphics[width=0.9\linewidth]{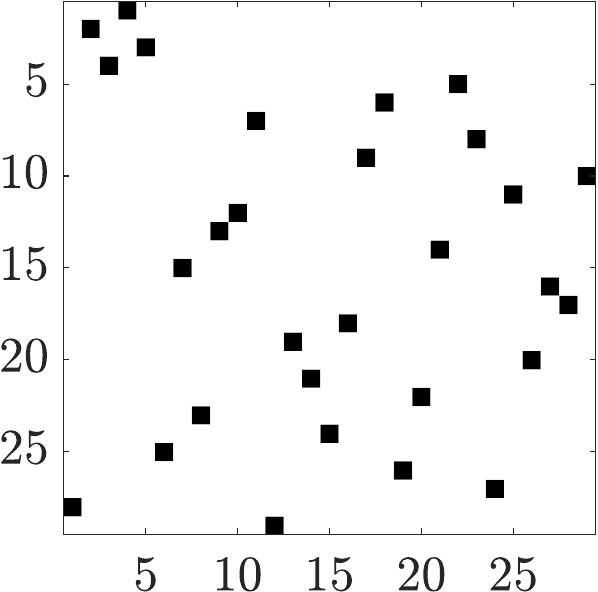}
    \end{minipage}
    \caption{(a) Strongly regular Paley graph $ \mc[A]{G} $. The graph $ \mc[B]{G} $ is a random permutation of $ \mc[A]{G} $. (b)--(e) Convergence of the Frank--Wolfe solver to a permutation matrix. The initial matrix $ X^{(0)} $ is constant. The cost function first decreases only slowly, but then quickly converges to the optimal value. The matrix $ X^{(3)} $ is a valid solution of the graph isomorphism problem.}
    \label{fig:Paley graph}
\end{figure}

We first consider the Paley graph \cite{Paley33} of size $ n = 29 $ shown in Figure~\ref{fig:Paley graph}\ts(a), whose automorphism group contains $ 406 $ permutations, see \cite{Jones20} for a detailed description of the automorphism groups of Paley graphs. The eigenvalues of the graph are $ \lambda_A = \left[14, \frac{1}{2}\sqrt{29} - \frac{1}{2}, -\frac{1}{2}\sqrt{29} - \frac{1}{2}\right] $ with multiplicities $ \mu_A = [1, 14, 14] $. That is, $ \rank(H) = 448 $, whereas $ n^2 = 841 $. Despite the many symmetries, the Frank--Wolfe solver quickly converges to a permutation matrix as shown in Figures~\ref{fig:Paley graph}\ts(b)--(e). The number of iterations depends on the random permutation we choose to generate $ \mc[B]{G} $.

\subsection{Biggs--Smith graph}

\begin{figure}
    \centering
    \begin{minipage}[t]{0.32\linewidth}
        \centering
        ~~~~\subfiguretitle{(b)}
        \vspace*{0.5ex}
        ~~~~\includegraphics[width=0.8\linewidth]{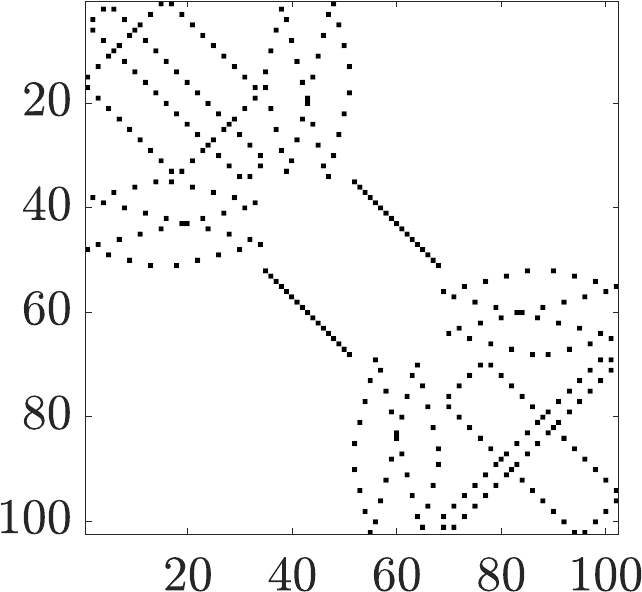} \\[1ex]
        ~~~~\subfiguretitle{(c)}
        \vspace*{0.5ex}
        \includegraphics[width=0.9\linewidth]{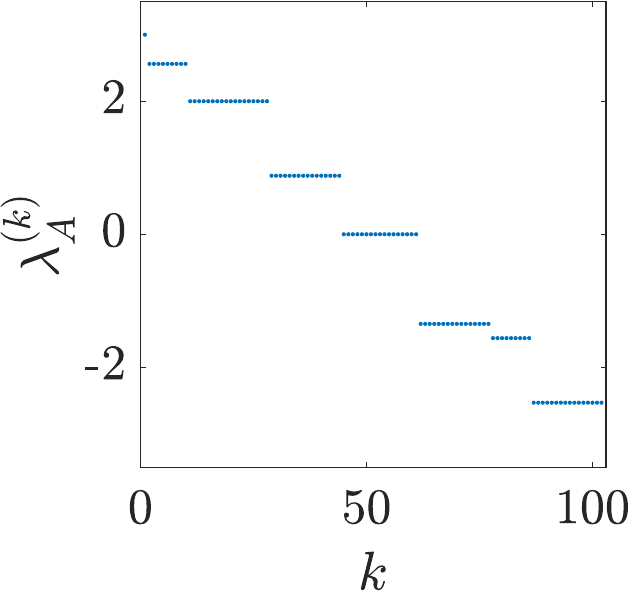}
    \end{minipage}
    \begin{minipage}[t]{0.32\linewidth}
        \centering
        \subfiguretitle{(a)}
        \vspace*{1ex}
        \rotatebox{90}{%
        \begin{tikzpicture}[
            dot/.style={draw,circle,minimum size=1.2mm,inner sep=0pt,outer sep=0pt,fill=red}
            ]
        \def\factor{8.2}
        \coordinate [dot] (1) at (\factor*0.,\factor*0.114553);
        \coordinate [dot] (2) at (\factor*0.,\factor*0.452007);
        \coordinate [dot] (3) at (\factor*0.00427368,\factor*0.160732);
        \coordinate [dot] (4) at (\factor*0.00427368,\factor*0.405828);
        \coordinate [dot] (5) at (\factor*0.0126821,\factor*0.0699353);
        \coordinate [dot] (6) at (\factor*0.0126821,\factor*0.496624);
        \coordinate [dot] (7) at (\factor*0.0249047,\factor*0.202225);
        \coordinate [dot] (8) at (\factor*0.0249047,\factor*0.364335);
        \coordinate [dot] (9) at (\factor*0.0406195,\factor*0.0329463);
        \coordinate [dot] (10) at (\factor*0.0406195,\factor*0.53366);
        \coordinate [dot] (11) at (\factor*0.0591831,\factor*0.23347);
        \coordinate [dot] (12) at (\factor*0.0591831,\factor*0.333089);
        \coordinate [dot] (13) at (\factor*0.0799987,\factor*0.00850105);
        \coordinate [dot] (14) at (\factor*0.0799987,\factor*0.558059);
        \coordinate [dot] (15) at (\factor*0.102422,\factor*0.250242);
        \coordinate [dot] (16) at (\factor*0.102422,\factor*0.316318);
        \coordinate [dot] (17) at (\factor*0.125581,\factor*0.);
        \coordinate [dot] (18) at (\factor*0.125581,\factor*0.56656);
        \coordinate [dot] (19) at (\factor*0.148739,\factor*0.250242);
        \coordinate [dot] (20) at (\factor*0.148739,\factor*0.316318);
        \coordinate [dot] (21) at (\factor*0.171163,\factor*0.00850105);
        \coordinate [dot] (22) at (\factor*0.171163,\factor*0.558059);
        \coordinate [dot] (23) at (\factor*0.191978,\factor*0.23347);
        \coordinate [dot] (24) at (\factor*0.191978,\factor*0.333089);
        \coordinate [dot] (25) at (\factor*0.210541,\factor*0.0329463);
        \coordinate [dot] (26) at (\factor*0.210541,\factor*0.53366);
        \coordinate [dot] (27) at (\factor*0.22621,\factor*0.202225);
        \coordinate [dot] (28) at (\factor*0.22621,\factor*0.364335);
        \coordinate [dot] (29) at (\factor*0.238479,\factor*0.0699353);
        \coordinate [dot] (30) at (\factor*0.238479,\factor*0.496624);
        \coordinate [dot] (31) at (\factor*0.246888,\factor*0.160732);
        \coordinate [dot] (32) at (\factor*0.246888,\factor*0.405828);
        \coordinate [dot] (33) at (\factor*0.251161,\factor*0.114553);
        \coordinate [dot] (34) at (\factor*0.251161,\factor*0.452007);
        \coordinate [dot] (35) at (\factor*0.421174,\factor*0.0308782);
        \coordinate [dot] (36) at (\factor*0.421174,\factor*0.0624459);
        \coordinate [dot] (37) at (\factor*0.421174,\factor*0.0940129);
        \coordinate [dot] (38) at (\factor*0.421174,\factor*0.125534);
        \coordinate [dot] (39) at (\factor*0.421174,\factor*0.157102);
        \coordinate [dot] (40) at (\factor*0.421174,\factor*0.188624);
        \coordinate [dot] (41) at (\factor*0.421174,\factor*0.220191);
        \coordinate [dot] (42) at (\factor*0.421174,\factor*0.251758);
        \coordinate [dot] (43) at (\factor*0.421174,\factor*0.28328);
        \coordinate [dot] (44) at (\factor*0.421174,\factor*0.314847);
        \coordinate [dot] (45) at (\factor*0.421174,\factor*0.346369);
        \coordinate [dot] (46) at (\factor*0.421174,\factor*0.377936);
        \coordinate [dot] (47) at (\factor*0.421174,\factor*0.409504);
        \coordinate [dot] (48) at (\factor*0.421174,\factor*0.441025);
        \coordinate [dot] (49) at (\factor*0.421174,\factor*0.472592);
        \coordinate [dot] (50) at (\factor*0.421174,\factor*0.504114);
        \coordinate [dot] (51) at (\factor*0.421174,\factor*0.535682);
        \coordinate [dot] (52) at (\factor*0.578829,\factor*0.0308782);
        \coordinate [dot] (53) at (\factor*0.578829,\factor*0.0624459);
        \coordinate [dot] (54) at (\factor*0.578829,\factor*0.0940129);
        \coordinate [dot] (55) at (\factor*0.578829,\factor*0.125534);
        \coordinate [dot] (56) at (\factor*0.578829,\factor*0.157102);
        \coordinate [dot] (57) at (\factor*0.578829,\factor*0.188624);
        \coordinate [dot] (58) at (\factor*0.578829,\factor*0.220191);
        \coordinate [dot] (59) at (\factor*0.578829,\factor*0.251758);
        \coordinate [dot] (60) at (\factor*0.578829,\factor*0.28328);
        \coordinate [dot] (61) at (\factor*0.578829,\factor*0.314847);
        \coordinate [dot] (62) at (\factor*0.578829,\factor*0.346369);
        \coordinate [dot] (63) at (\factor*0.578829,\factor*0.377936);
        \coordinate [dot] (64) at (\factor*0.578829,\factor*0.409504);
        \coordinate [dot] (65) at (\factor*0.578829,\factor*0.441025);
        \coordinate [dot] (66) at (\factor*0.578829,\factor*0.472592);
        \coordinate [dot] (67) at (\factor*0.578829,\factor*0.504114);
        \coordinate [dot] (68) at (\factor*0.578829,\factor*0.535682);
        \coordinate [dot] (69) at (\factor*0.748799,\factor*0.114553);
        \coordinate [dot] (70) at (\factor*0.748799,\factor*0.452007);
        \coordinate [dot] (71) at (\factor*0.75307,\factor*0.160732);
        \coordinate [dot] (72) at (\factor*0.75307,\factor*0.405828);
        \coordinate [dot] (73) at (\factor*0.761481,\factor*0.0699353);
        \coordinate [dot] (74) at (\factor*0.761481,\factor*0.496624);
        \coordinate [dot] (75) at (\factor*0.773744,\factor*0.202225);
        \coordinate [dot] (76) at (\factor*0.773744,\factor*0.364335);
        \coordinate [dot] (77) at (\factor*0.789418,\factor*0.0329463);
        \coordinate [dot] (78) at (\factor*0.789418,\factor*0.53366);
        \coordinate [dot] (79) at (\factor*0.808026,\factor*0.23347);
        \coordinate [dot] (80) at (\factor*0.808026,\factor*0.333089);
        \coordinate [dot] (81) at (\factor*0.828839,\factor*0.00850105);
        \coordinate [dot] (82) at (\factor*0.828839,\factor*0.558059);
        \coordinate [dot] (83) at (\factor*0.851219,\factor*0.250242);
        \coordinate [dot] (84) at (\factor*0.851219,\factor*0.316318);
        \coordinate [dot] (85) at (\factor*0.874421,\factor*0.);
        \coordinate [dot] (86) at (\factor*0.874421,\factor*0.56656);
        \coordinate [dot] (87) at (\factor*0.89758,\factor*0.250242);
        \coordinate [dot] (88) at (\factor*0.89758,\factor*0.316318);
        \coordinate [dot] (89) at (\factor*0.919959,\factor*0.00850105);
        \coordinate [dot] (90) at (\factor*0.919959,\factor*0.558059);
        \coordinate [dot] (91) at (\factor*0.940817,\factor*0.23347);
        \coordinate [dot] (92) at (\factor*0.940817,\factor*0.333089);
        \coordinate [dot] (93) at (\factor*0.959381,\factor*0.0329463);
        \coordinate [dot] (94) at (\factor*0.959381,\factor*0.53366);
        \coordinate [dot] (95) at (\factor*0.975055,\factor*0.202225);
        \coordinate [dot] (96) at (\factor*0.975055,\factor*0.364335);
        \coordinate [dot] (97) at (\factor*0.987318,\factor*0.0699353);
        \coordinate [dot] (98) at (\factor*0.987318,\factor*0.496624);
        \coordinate [dot] (99) at (\factor*0.995729,\factor*0.160732);
        \coordinate [dot] (100) at (\factor*0.995729,\factor*0.405828);
        \coordinate [dot] (101) at (\factor*1.,\factor*0.114553);
        \coordinate [dot] (102) at (\factor*1.,\factor*0.452007);
        \path[-,line width=0.05mm] (1) edge (15)
                 (1) edge (17)
                 (1) edge (48)
                 (2) edge (4)
                 (2) edge (6)
                 (2) edge (38)
                 (3) edge (13)
                 (3) edge (19)
                 (3) edge (47)
                 (4) edge (8)
                 (4) edge (39)
                 (5) edge (11)
                 (5) edge (21)
                 (5) edge (49)
                 (6) edge (10)
                 (6) edge (37)
                 (7) edge (9)
                 (7) edge (23)
                 (7) edge (46)
                 (8) edge (12)
                 (8) edge (40)
                 (9) edge (25)
                 (9) edge (50)
                 (10) edge (14)
                 (10) edge (36)
                 (11) edge (27)
                 (11) edge (45)
                 (12) edge (16)
                 (12) edge (41)
                 (13) edge (29)
                 (13) edge (51)
                 (14) edge (18)
                 (14) edge (35)
                 (15) edge (31)
                 (15) edge (44)
                 (16) edge (20)
                 (16) edge (42)
                 (17) edge (33)
                 (17) edge (35)
                 (18) edge (22)
                 (18) edge (51)
                 (19) edge (33)
                 (19) edge (43)
                 (20) edge (24)
                 (20) edge (43)
                 (21) edge (31)
                 (21) edge (36)
                 (22) edge (26)
                 (22) edge (50)
                 (23) edge (29)
                 (23) edge (42)
                 (24) edge (28)
                 (24) edge (44)
                 (25) edge (27)
                 (25) edge (37)
                 (26) edge (30)
                 (26) edge (49)
                 (27) edge (41)
                 (28) edge (32)
                 (28) edge (45)
                 (29) edge (38)
                 (30) edge (34)
                 (30) edge (48)
                 (31) edge (40)
                 (32) edge (34)
                 (32) edge (46)
                 (33) edge (39)
                 (34) edge (47)
                 (35) edge (52)
                 (36) edge (53)
                 (37) edge (54)
                 (38) edge (55)
                 (39) edge (56)
                 (40) edge (57)
                 (41) edge (58)
                 (42) edge (59)
                 (43) edge (60)
                 (44) edge (61)
                 (45) edge (62)
                 (46) edge (63)
                 (47) edge (64)
                 (48) edge (65)
                 (49) edge (66)
                 (50) edge (67)
                 (51) edge (68)
                 (52) edge (85)
                 (52) edge (90)
                 (53) edge (81)
                 (53) edge (94)
                 (54) edge (77)
                 (54) edge (98)
                 (55) edge (73)
                 (55) edge (102)
                 (56) edge (69)
                 (56) edge (100)
                 (57) edge (71)
                 (57) edge (96)
                 (58) edge (75)
                 (58) edge (92)
                 (59) edge (79)
                 (59) edge (88)
                 (60) edge (83)
                 (60) edge (84)
                 (61) edge (80)
                 (61) edge (87)
                 (62) edge (76)
                 (62) edge (91)
                 (63) edge (72)
                 (63) edge (95)
                 (64) edge (70)
                 (64) edge (99)
                 (65) edge (74)
                 (65) edge (101)
                 (66) edge (78)
                 (66) edge (97)
                 (67) edge (82)
                 (67) edge (93)
                 (68) edge (86)
                 (68) edge (89)
                 (69) edge (99)
                 (69) edge (101)
                 (70) edge (76)
                 (70) edge (78)
                 (71) edge (97)
                 (71) edge (101)
                 (72) edge (74)
                 (72) edge (80)
                 (73) edge (95)
                 (73) edge (99)
                 (74) edge (82)
                 (75) edge (93)
                 (75) edge (97)
                 (76) edge (84)
                 (77) edge (91)
                 (77) edge (95)
                 (78) edge (86)
                 (79) edge (89)
                 (79) edge (93)
                 (80) edge (88)
                 (81) edge (87)
                 (81) edge (91)
                 (82) edge (90)
                 (83) edge (85)
                 (83) edge (89)
                 (84) edge (92)
                 (85) edge (87)
                 (86) edge (94)
                 (88) edge (96)
                 (90) edge (98)
                 (92) edge (100)
                 (94) edge (102)
                 (96) edge (102)
                 (98) edge (100);
        \end{tikzpicture}}
    \end{minipage}
    \begin{minipage}[t]{0.32\linewidth}
        \centering
        \subfiguretitle{(d) $ f\big(X^{(0)}\big) = -1 $}
        \vspace*{0.5ex}
        \includegraphics[width=0.8\linewidth]{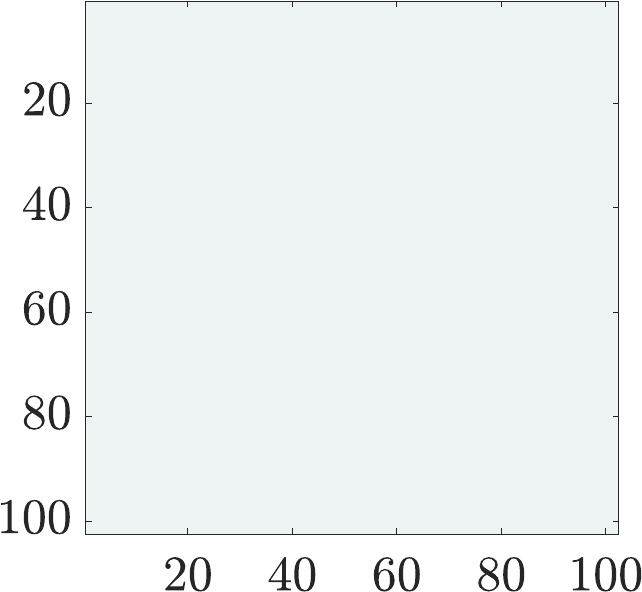} \\[1ex]
        \subfiguretitle{(e) $ f\big(X^{(1)}\big) = -102 $}
        \vspace*{0.5ex}
        \includegraphics[width=0.8\linewidth]{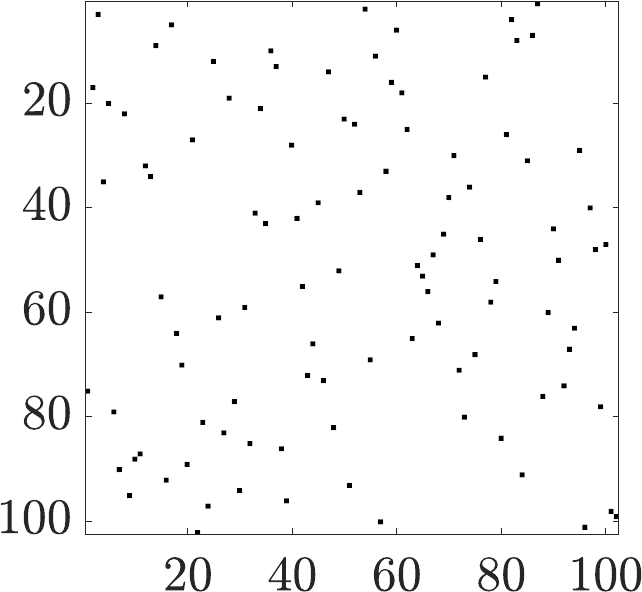}
    \end{minipage}
    \caption{(a) Visualization of the Biggs--Smith graph $ \mc[A]{G} $ using the layout from \cite{Weisstein23}. The graph $ \mc[B]{G} $ is again defined to be a random permutation of $ \mc[A]{G} $. (b) Adjacency matrix $ A $. (c) Spectrum of $ A $. (d)--(e) Convergence of the Frank--Wolfe solver to a permutation matrix after just one step. The initial matrix $ X^{(0)} $ is again constant.}
    \label{fig:Biggs-Smith graph}
\end{figure}

The automorphism group of the Biggs--Smith graph \cite{BS71} of size $ n = 102 $, see Figure~\ref{fig:Biggs-Smith graph}\ts(a), contains 2448 elements. The adjacency matrix $ A $ is shown in Figure~\ref{fig:Biggs-Smith graph}\ts(b) and its spectrum in Figure~\ref{fig:Biggs-Smith graph}\ts(c). The multiplicities of the eigenvalues are $ \mu_A = [1, 9, 18, 16, 17, 16, 9, 16] $, which implies that $ \rank(H) = 8860 $. The Frank--Wolfe solver converges after just one iteration to a permutation matrix as shown in Figures~\ref{fig:Biggs-Smith graph}\ts(d)--(e).

\section{Conclusion}

The main advantage of the proposed approach is that the graph isomorphism problem is turned into a continuous optimization problem, which in general tend to be easier to solve than combinatorial optimization problems. Although orthogonal and doubly stochastic relaxations have been considered before, we extended these techniques to strongly regular graphs and showed how symmetries and repeated eigenvalues increase the dimensions of the spaces of feasible solutions. We believe that continuous relaxations will not only open new avenues for deriving efficient numerical algorithms for the graph isomorphism problem, but also for gaining insights into the geometric properties of the sets of feasible solutions and the inherent symmetries of graphs.

Although the Frank--Wolfe solver typically converges within a few iterations, the algorithm sometimes gets stuck in local minima, which are often convex combinations of just a few optimal permutation matrices. We also implemented SQP-based solvers, but they suffer from the same problem. Small random perturbations of the initial condition in a feasible direction seem to alleviate this problem. Also using different (i.e., more aggressive) penalty functions, adding barrier functions, or continuation methods might help avoid local minima. A detailed analysis of the convergence properties and improved search-direction strategies are beyond the scope of this paper and will be considered in future work.

Further open problems include: How sensitive is the algorithm with respect to numerical errors and is it possible to develop sparse solvers? What is the worst-time complexity of the algorithm and for which types of graphs is the problem guaranteed to be solvable in polynomial time? Can we exploit properties of the orthogonal and convex relaxations at the same time? Another interesting question is whether this approach can be extended to the graph matching problem by including not only the vectors contained in the null space of $ H $. The derived optimization problem formulation might also be amenable to quantum annealers, provided that the constrained problem can be rewritten as a quadratic unconstrained binary optimization (QUBO) problem.

\section*{Acknowledgments}

P.G.\ was supported by the \emph{QuantERA II} programme funded by the European Union’s Horizon~2020 research and innovation programme under Grant Agreement No.\ 101017733. We would like to thank the reviewers for the helpful comments and suggestions.

\section*{Data availability}

The Matlab code and examples that support the findings presented in this paper are available at \url{https://github.com/sklus/FrankWolfeGI}.

\bibliographystyle{unsrturl}
\bibliography{GI}

\end{document}